\definecolor{backcolour}{RGB}{248, 248, 248}
\tiny\color{gray},
\bfseries\color{blue},
\newtcolorbox{grayboxtext}{highlightstyle}
\definecolor{backcolour}{rgb}{0.95,0.95,0.95}
\newcommand{\pheader}[1]{%
  \vspace{0.8ex}%
  \textbf{#1.}\ %
}
\newcommand{\toolname}{\textsc{AbsEvolve}\xspace}
\newtheorem{definition}{Definition}[section]
\newtheorem{theorem}{Theorem}[section]
\newtheorem{example}{Example}[section]
\newcommand{\block}{\mathcal{B}}
\newcommand{\algps}{UPOSE\xspace}
\newcommand{\alggs}{AGG\xspace}
\newcommand{\adom}{\mathcal{A}}
\newcommand{\cdom}{\mathcal{C}}
\newcommand{\conval}{\mathtt{c}}
\newcommand{\elina}{\textsc{Elina}\xspace}
\newcommand{\clam}{\textsc{Clam}\xspace}
\newcommand{\vars}{\mathcal{V}}
\newcommand{\op}{\mathcal{O}}
\newcommand{\exps}{\mathbb{E}}
\newcommand{\real}{\mathbb{R}}
\newcommand{\sigtwo}{\sigma_{\le 2}}
\newcommand{\guard}{\mathcal{G}}
\newcommand{\temp}{\mathcal{T}}
\newcommand{\tempspace}{\mathbb{R}^{t \times n}}
\newcommand{\absval}{\mathtt{a}}
\newcommand{\absvalin}{\mathtt{a}_{\mathit{in}}}
\newcommand{\absvalout}{\mathtt{a}_{\mathit{out}}}
\newcommand{\absvalf}[1]{\mathtt{a}_{#1}}
\newcommand{\psm}{\mathcal{M}}
\newcommand{\param}{\theta}
\newcommand{\parspace}{\Theta}
\newcommand{\lf}{L}
\newcommand{\eff}{\hat{F}}
\newcommand{\efc}{\hat{G}}
\newcommand{\seq}{\hat{S}}
\newcommand{\invs}{\mathcal{I}}
\newcommand{\elemspace}{\mathcal{S}_\temp}
\begin{document}

\title{Evolving Abstract Transformers for Gradient-Guided, Adaptable Abstract Interpretation}

\author{Shaurya Gomber}
\orcid{0009-0002-1783-4899}
\affiliation{%
  \institution{University of Illinois Urbana-Champaign}
  \city{Urbana}
  \country{USA}
}
\email{sgomber2@illinois.edu}

\author{Debangshu Banerjee}
\orcid{0009-0001-0163-9717}
\affiliation{%
  \institution{University of Illinois Urbana-Champaign}
  \city{Urbana}
  \country{USA}
}
\email{db21@illinois.edu}

\author{Gagandeep Singh}
\orcid{0000-0002-9299-2961}
\affiliation{%
  \institution{University of Illinois Urbana-Champaign}
  \city{Urbana}
  \country{USA}
}
\email{ggnds@illinois.edu}


\begin{abstract}
Current numerical abstract interpretation relies on fixed, hand-crafted, instruction-specific transformers tailored to each domain, giving rise to three significant limitations. First, extensibility is limited because transformers cannot be reused across domains and new transformers need to be designed for each new domain or operator. Second, precise compositional reasoning over instruction sequences is difficult as transformers are defined only at the instruction level. Third, all downstream tasks are forced to use the same fixed transformer, irrespective of their precision, efficiency, or task-specific requirements. To address these limitations, we propose the \textit{Evolving Abstract Transformer}, a general transformer that replaces the fixed single-output design of traditional transformers with an adaptable search over a parametric space of sound outputs. This is achieved through two underlying algorithms we develop. First, the \textit{Universal Parametric Output Space Encoder (UPOSE)} constructs a compact parametric space of sound outputs for any polyhedral numerical domain and any operator in the \textit{Quadratic-Bounded Guarded Operators (QGO)} class which includes both individual instructions and structured sequences. Next, the \textit{Adaptive Gradient Guidance (AGG)} algorithm leverages the differentiable structure of the space generated by \algps and uses gradient-based updates to efficiently search it according to downstream analysis objectives and available runtime, continually \textit{evolving} the output as more time is provided. We implement these ideas in the \toolname framework and evaluate their effectiveness across three numerical abstract domains: Zones, Octagons, and Polyhedra. Our results demonstrate that the evolving transformer works across domains and handles diverse instructions and sequences, allowing efficient adaptability in the precision–efficiency tradeoff by adjusting the number of gradient steps in the search, while also reaching the most precise invariants up to 3.2× faster than existing baselines.
\end{abstract}

\begin{CCSXML}
<ccs2012>
   <concept>
       <concept_id>10003752.10010124.10010138.10010143</concept_id>
       <concept_desc>Theory of computation~Program analysis</concept_desc>
       <concept_significance>500</concept_significance>
   </concept>
   <concept>
       <concept_id>10011007.10010940.10010992.10010998.10011000</concept_id>
       <concept_desc>Software and its engineering~Automated static analysis</concept_desc>
       <concept_significance>500</concept_significance>
   </concept>
 </ccs2012>
\end{CCSXML}

\ccsdesc[500]{Theory of computation~Program analysis}
\ccsdesc[500]{Software and its engineering~Automated static analysis}

\keywords{Efficient and Precise Abstract Interpretation, Adaptable Analysis, Parametric Abstract Outputs, Gradient-Guided Optimization}


\maketitle
\section{Introduction}
Abstract interpretation~\cite{absinterp} is a widely used framework for static program analysis that soundly approximates a program’s \textit{concrete} semantics using an alternate representation called the \textit{abstract domain}. Over the years, a variety of domains have been developed to capture different properties, such as heap structure~\cite{heap1, checker}, control-flow behavior~\cite{cntrlFlow}, and numeric relationships~\cite{polyhedra, subPoly}. This work focuses on \textit{numerical} domains, which are central to reasoning about variable ranges~\cite{SinghPV17}, arithmetic updates~\cite{subPoly}, and control-flow conditions~\cite{cntrlFlow} in real-world programs. Several numerical domains exist, such as Interval~\cite{interval}, Octagons~\cite{octagon} and Polyhedra~\cite{polyhedra}, each varying in expressiveness and efficiency. Selecting the right domain involves the fundamental trade-off between precision and scalability. Once the domain is chosen, the analysis is performed through the domain's \textit{abstract transformers}, which are functions that define how program operations (e.g., assignments) are soundly over-approximated within the domain.

\pheader{Challenge 1: Per-Domain and Per-Instruction Design}
Designing an abstract transformer is complicated, since it requires balancing efficiency and precision while ensuring soundness throughout. This must be carried out for all possible instructions across all abstract domains. Moreover, precision in abstract interpretation is \textit{non-compositional}, i.e., analyzing an instruction sequence per-instruction and composing the results is \textit{generally} less precise than analyzing the sequence jointly. Thus, in principle, achieving precise analysis would require designing separate abstract transformers for all instruction sequences. Designing such a large number of transformers manually is very tedious and has motivated efforts toward automating transformer construction~\cite{autoabs}. Prior work such as~\cite{abstractionSyn1, abstractionSyn2} uses program synthesis to generate transformers within a user-defined DSL for specific domains such as strings. However, such automation is not possible for general numerical domains like Octagons and Polyhedra, which involve reasoning about numerical relations that are more complex to handle. Consequently, current numerical analysis relies on libraries such as ELINA~\cite{SinghPV17}, APRON~\cite{apron}, and PPL~\cite{ppl}, which support limited standard domains such as Octagons and Polyhedra by providing hand-crafted transformers for them. These libraries face two key limitations. First, many analysis scenarios require domains beyond the standard ones, such as TVPI~\cite{tvpi}, Octahedron~\cite{octah}, Logahedra~\cite{logh} etc. Although these domains resemble existing ones like Octagons in being \textit{polyhedral} and representing states via conjunctions of linear constraints, the abstract transformers must be redesigned for each domain, as the existing transformers are tied to their specific domains and cannot be reused. Second, because the transformers are manually designed, it is infeasible to construct separate transformers for all instruction sequences, so they are designed only at the instruction level. As a result, instruction sequences are analyzed by composing per-instruction transformers, which reduces precision.

\pheader{Challenge 2: Imprecision \& Lack of Adaptability}
The loss of precision from composing transformers is compounded by the fact that current libraries often do not compute the most precise transformer even for simple operations. For instance, Template Constraint Matrix (TCM) domains~\cite{tcm_domain} such as Octagons are widely used, as they are more efficient than expressive domains like Polyhedra, whose operations have exponential complexity. However, these domains cannot precisely capture even simple affine assignments such as $z = 2x + 3y$ that fall outside their constraint form. Current libraries worsen this imprecision by not even computing the most precise over-approximation within the domain, as it is computationally expensive and involves solving a large number of linear programs per transformer call. Moreover, domains like Polyhedra can represent linear assignments exactly but face similar limitations for non-linear assignments like $z = x * x$, whose precise analysis requires calls to non-linear solvers that are extremely slow and do not even guarantee soundness~\cite{gurobi_nonlinear_unsound}. To remain efficient, existing libraries use
ad hoc handcrafted over-approximations that are sound but often highly imprecise.

Furthermore, current libraries rely on \textit{fixed} transformers that always prioritize efficiency over precision, \textit{irrespective of the analysis task}. Different scenarios, however, demand different trade-offs: lightweight runtime checkers~\cite{checker} require fast analyses, whereas safety-critical analyzers~\cite{astree} can afford longer runtimes for higher precision. 
Prior works~\cite{tune1,tune2} have shown the benefits of adapting analyzers to downstream goals by tuning general parameters that control their behavior, while~\cite{bai,resourceaware} highlight the need and benefits of adapting analyses to time and memory constraints. Current abstract transformers, however, are not amenable to such tuning, as these transformers are rigid and always follow the same routine, producing a single fixed output (from among potentially infinite sound outputs) that cannot be tuned for precision or compute efficiency.

\pheader{Key Idea: Evolving Abstract Transformer}
To overcome the challenges discussed, we introduce \emph{Evolving Abstract Transformer}, a new paradigm that unifies transformer design across various domains and instructions and replaces the fixed output design with an adaptable, search-based process.
The design proceeds in two key phases. For a given input abstract element, our \textit{Universal Parametric Output Space Encoder} (\algps) constructs a \textit{symbolic, parametric} space of sound outputs, rather than computing a single fixed output. The \textit{Adaptive Gradient Guidance} (\alggs) algorithm then starts from an initial point in this space and efficiently traverses it through gradient-based updates, iteratively refining and \emph{evolving} the outputs toward increasingly precise and always sound results. This novel search-based design enables analyses to adapt precision over time and converge toward more precise outputs within the available time budget. This design offers the following benefits:

\begin{enumerate}[leftmargin=*]

\item \textbf{Cross-Domain and Sequence Generality:}
Unlike existing transformers, our evolving transformer is not tied to a specific domain or instruction. The \algps algorithm works for all Polyhedral domains (Section~\ref{sec:background}), from standard ones like Octagons, Polyhedra to domains like Octahedron, and handles a broad range of instructions, including affine, quadratic, and guarded assignments captured by the Quadratic-Bounded Guarded Operators (QGO) class (Section~\ref{sec:upose}). The QGO class is highly expressive and also captures instruction sequences whose combined semantics lie in this class (Sec.~\ref{sec:framework}). The \algps algorithm enables efficient abstraction of such sequences as a whole, outperforming SMT-based symbolic abstraction approaches such as~\cite{awsSolver}.

\item \textbf{Parametric Space for Adaptable Analysis}: Instead of computing a single sound output, the \algps algorithm synthesizes a \emph{sound-by-construction}, \emph{compact parametric} space of outputs, encoded using a small set of parameters. This design makes our evolving transformer flexible, allowing analyses to adaptively search within an always-sound output space to find results that best suit their needs.
For instance, the space of sound outputs synthesized by \algps includes the precise outputs computed by solver-based transformers~\cite{tcm_domain,awsSolver}, enabling analyses to achieve solver-level precision when such outputs are discovered by the search.

\item \textbf{Efficient Gradient-Guided Search:}
The effectiveness of our evolving transformer depends on the efficiency of its underlying search. This is addressed by the \alggs algorithm, which exploits the fact that the parametric space produced by \algps is \textit{differentiable}, and uses gradient information to efficiently guide the search and refine outputs within a given \emph{runtime budget}~$\mathcal{R}$ (number of gradient steps). This gradient-guided process enables continual refinement and efficient use of runtime toward achieving higher precision, allowing \alggs to converge to the most precise outputs much faster than solver-based transformers~\cite{tcm_domain,awsSolver}, as shown in Section~\ref{sec:eval1}. More generally, \alggs can optimize any search objective~$\mathcal{J}$ (score function) using its gradient~$\nabla\mathcal{J}$, extending beyond precision to other analysis goals.

\end{enumerate}

\begin{figure}[!t]
    \includegraphics[width=1\linewidth]{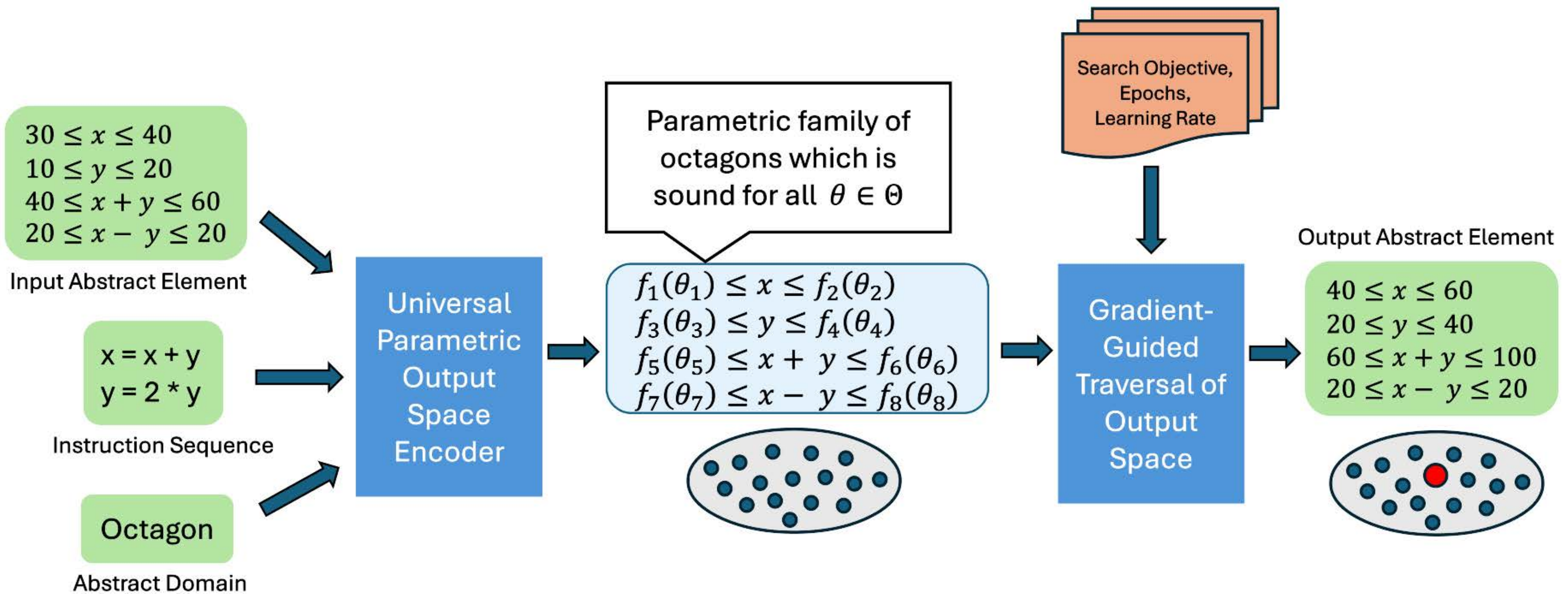}
\caption{For a given numerical domain, instruction sequence, and input abstract element, \toolname first generates a parametric space of sound outputs for that sequence. Then, it performs gradient-guided search, driven by user-specified parameters, to efficiently select a suitable output during analysis.}
\label{fig:overview-fig}
\vspace{-2pt}
\end{figure}

\textbf{\toolname framework.}
We implement these ideas in the \toolname framework (Fig.~\ref{fig:overview-fig}). \toolname performs abstract interpretation using Evolving Abstract Transformers, applying the \algps algorithm to encode a parametric space of sound outputs for the supported operators and the \alggs algorithm to select suitable outputs during analysis. For instruction sequences whose combined effect lies within the QGO class, \toolname automatically merges them and uses \algps to encode a joint parametric space for the entire block (Sec~\ref{sec:framework}), reducing precision loss from composing per-instruction transformers. For unsupported operations, the framework falls back to baseline transformers (from \elina). Users can specify parameters such as the search budget~$\mathcal{R}$ (number of gradient steps) to control the search process and adapt precision based on available resources.

\begin{grayboxtext}
\toolname is the first precision–efficiency trade-off adaptable, sound-by-construction framework for abstract interpretation that combines formal soundness guarantees with the efficiency of gradient-guided optimization.
\end{grayboxtext}

\pheader{Main Contributions} This paper makes the following contributions:
\begin{itemize}[leftmargin=*]
    \item We introduce the \emph{Evolving Abstract Transformer}, a new general transformer that rethinks the traditional design by constructing a parametric space of sound outputs rather than a single output. This is achieved by using our Universal Parametric Output Space Encoder (\algps) algorithm, which operates across polyhedral numerical domains and handles a wide range of instructions and sequences from the Quadratic-Bounded Guarded Operators (QGO) class (Section~\ref{sec:upose}).

    \item To explore the space of outputs efficiently, we propose the novel Adaptive Gradient Guidance (\alggs) algorithm, which takes a search objective $\mathcal{J}$ and runtime budget $\mathcal{R}$ as input, and efficiently navigates the output space using gradient signals from $\mathcal{J}$ to guide the search toward outputs that best align with the requirements of the downstream tasks (Section~\ref{sec:agg}).

    \item We implement these algorithms in the \toolname framework (Section~\ref{sec:framework}) and evaluate them across three numerical domains: Zones, Octagons, and Polyhedra. Our results (Section~\ref{sec:eval}) show that \toolname enables compositional reasoning over instruction sequences and achieves significant, adaptable precision improvements over baselines through gradient-guided evolution, while also reaching the most precise invariants up to 3.2× faster than existing baselines.
\end{itemize}

\section{Background}
\label{sec:background}

\pheader{Abstract Interpretation}
Abstract interpretation~\cite{absinterp} over-approximates the behavior of a system represented by a \textit{Concrete Domain} using an alternate \textit{Abstract Domain}. The concrete domain $(\cdom, \sqsubseteq_\cdom)$ consists of concrete elements partially ordered by~$\sqsubseteq_C$, while the abstract domain $(\adom, \sqsubseteq_\adom)$ contains abstract elements used to \textit{represent} elements in $\cdom$. In program analysis, a \emph{state} is a valuation over program variables~$\vars$, and the concrete domain $\cdom$ is the powerset of all program states, $\mathcal{P}(\mathit{States})$, ordered by subset inclusion~$\subseteq$. Each element $\conval \in \cdom$ thus represents a set of possible program states. Abstract domains such as Intervals, Zones, or Polyhedra provide finite representations of such sets, enabling efficient analysis by over-approximating elements of~$\cdom$. For the rest of this paper, we assume the program analysis setting with $\cdom = \mathcal{P}(\mathit{States})$.

\pheader{Concretization Function}
The \emph{concretization} function $\gamma : \adom \to \cdom$ is a monotonic function, which, for a given abstract element $\absval \in \adom$ computes the concrete element $\gamma(\absval) \in \cdom$ represented by $\absval$. $\gamma(\absval)$ captures all program states represented by $\absval$.

\begin{definition}[Precision of Abstract Elements]
\label{def:precision}
An abstract element $\absvalf{1} \in \adom$ is \emph{more precise} than another element~$\absvalf{2} \in \adom$ if $\gamma(\absvalf{1}) \subseteq \gamma(\absvalf{2})$.
\label{def:prec-abs}
\end{definition}

\textbf{Abstract Transformers.} The concrete domain includes operations $op : \cdom \to \cdom$ that operate on concrete elements. In program analysis, these operations correspond to statements such as assignments that transform program states. To analyze the concrete domain within the abstract domain, we require, for each concrete operator~$op$, a corresponding abstract operator~$\hat{op} : \adom \to \adom$ that operates on abstract elements in a way that \textit{soundly} reflects the effect of~$op$ on concrete elements. This function~$\hat{op}$ is called the \emph{abstract transformer} corresponding to the concrete operator~$op$. Now, we define the soundness of abstract transformers, followed by the notion of precision between transformers and the definition of the most-precise transformer.

\begin{definition}[Soundness of Abstract Transformer]
The abstract transformer $\hat{op} : \adom \to \adom$ for concrete operator $op$ is sound if it over-approximates the effect of $op$, i.e. for all abstract elements $\absval \in A$, the inclusion relation $op(\gamma(\absval)) \subseteq \gamma(\hat{op}(\absval))$ holds.
\end{definition}

\begin{definition}[Precision of Abstract Transformers]
\label{def:precision-transformer}
Given two sound abstract transformers~$\hat{op}_1, \hat{op}_2 : \adom \to \adom$ for the same concrete operator~$op$, we say that~$\hat{op}_1$ is \emph{more precise} than~$\hat{op}_2$ if it generates more precise outputs, i,e, $\forall \absval \in \adom, \; \gamma(\hat{op}_1(\absval)) \subseteq \gamma(\hat{op}_2(\absval))$.
\end{definition}

\begin{definition}[Most-Precise Abstract Transformer]
\label{def:best-transformer}
An abstract transformer~$\hat{op}^{\#} : \adom \to \adom$ is the \emph{most-precise} (or \emph{best}) transformer for a concrete operator~$op : \cdom \to \cdom$ in domain $\adom$ if it is sound and generates most-precise outputs, i.e, for every other sound abstract transformer~$\hat{op}' : \adom \to \adom$ corresponding to~$op$,  $\forall \absval \in \adom, \; \gamma(\hat{op}^{\#}(\absval)) \subseteq \gamma(\hat{op}'(\absval))$ holds.
\end{definition}

\textbf{Polyhedral Domains.}
Polyhedral domains are a widely used family of numerical domains that represent sets of program states using systems of linear inequalities. Given a finite set of program variables~$\vars = \{v_1, \ldots, v_n\}$, an abstract element is~$\absval = \{\mathbf{A}\mathbf{v} \le \mathbf{b}\}$, where~$\mathbf{v} = [v_1, \ldots, v_n]^\top$, $\mathbf{A} \in \mathbb{R}^{m \times n}$, and~$\mathbf{b} \in \mathbb{R}^m$. This formulation subsumes a range of widely used domains: Intervals~\cite{interval} capture independent bounds on variables, Zones~\cite{zones} track differences between variables, Octagons~\cite{octagon} handle sums and differences, and general Polyhedra~\cite{polyhedra} support arbitrary linear relations.

\pheader{Template Constraint Matrix (TCM) Domains}
A common subclass of polyhedral domains is the family of \emph{template-based domains}, which restricts abstract elements to a fixed set of linear directions. These directions are encoded by a \emph{Template Constraint Matrix (TCM)}~\cite{tcm_domain} $\temp \in \tempspace$, where each row~$\temp_i$ specifies a linear constraint over the finite set of variables~$\mathcal{V} = \{v_1, \ldots, v_n\}$. Letting $\mathbf{v} = [v_1, \ldots, v_n]^\top$, abstract elements take the form $\absval = \{\mathcal{T} \mathbf{v} \ge \mathbf{c}\}$, where $\mathbf{c} \in (\mathbb{R} \cup \{-\infty\})^t$ gives lower bounds for each constraint. Any constraint with $c_i = -\infty$ is treated as vacuously true and omitted. TCM domains include domains like Interval, Zones, Octagons, etc.
\section{Overview}

\begin{figure}
    \centering
    \begin{subfigure}[t]{0.48\linewidth}
        \centering
\begin{lstlisting}
int x = 0, y = 0;

while (y <= 10)
{
   x = x + y;
   y = y + 1;
   x = x - y;
}

assert(x <= 0);
\end{lstlisting}
        \caption{Imprecision due to imprecise assignment ~\\transformers}
        \label{lst:octa}
    \end{subfigure}
    \hfill
    \begin{subfigure}[t]{0.48\linewidth}
        \centering
\begin{lstlisting}
int x = 30;
int y = 10;

while (y <= 80)
{
   x = x + y;
   y = 2*y;
}

assert(x - y <= 20);
\end{lstlisting}
        \caption{Imprecision due to statement-wise abstract ~\\interpretation}
        \label{lst:octb}
    \end{subfigure}
    \caption{Examples demonstrating imprecision of current libraries for octagon analysis.}
    \label{fig:oct-codes}
\end{figure}
In this section, we highlight the key limitations of the current abstract interpretation design and demonstrate how our \textit{Evolving Abstract Transformer} helps overcome them.

\subsection{Limitations of Current Design}
Consider the program in Fig.~\ref{lst:octa}. Since $x = 0$ at start of the loop, $x \le 0$ holds initially. In each iteration, $x$ increases by $y$ and then decreases by $y + 1$, effectively decreasing by 1. So, $x \le 0$ holds always and is a valid loop invariant, implying that the assertion holds. However, when analyzed with the state-of-the-art \elina~\cite{SinghPV17} library using the Octagon domain, the inferred invariant does not include $x \le 0$ and assertion remains unproven, even though this constraint lies in the octagonal template. This is because the abstract transformers used for assignments in \elina are \textit{imprecise}. For instance, at Line~7, the following transition arises during the analysis:
\begin{equation}
o_{\text{in}} = \{\, x \le 10,\; 1 \le y \le 11,\; x - y \le -1,\; x + y \le 21 \}\
\quad
x := x - y
\quad
\{\, o_{\text{out}} \,\}
  \label{eq:overview0}
\end{equation}

In the Octagon domain, each constraint has the form $\pm x \pm y \ge c$, and computing the most precise $o_{\text{out}}$ requires solving one linear program per constraint direction. To compute the tightest lower bound for $-x$, we minimize $y - x$ (as new $x$ value is $x - y$) under the current constraints:

\begin{equation}
  c^\# = \min_{x, y} (y - x) \quad \text{s.t.} \quad x \le 10,\; 1 \le y \le 11,\; x - y \le -1,\; x + y \le 21
  \label{eq:overview1}
\end{equation}

\pheader{Limitation 1: Fixed Imprecision}
Solving the above LP yields $c^\# = 1$, so the most precise bound for $-x$ is $-x \ge 1$. As solving such optimization problems for all template directions (quadratic in the number of variables for Octagons) at each program point is computationally expensive, \elina uses a sound but imprecise \emph{interval relaxation} heuristic that ignores relational constraints and reasons only over variable intervals. Given $x \in [-\infty, 10]$ and $y \in [1, 11]$, this yields $y - x \in [-9, \infty]$ (and thus $c^\# = -9$), giving $-x \ge -9$, which is sound but much less precise than the true bound $-x \ge 1$. This loss of precision ultimately prevents the analyzer from discovering the invariant $x \le 0$.

\pheader{Limitation 2: Lack of Adaptability}
The key limitation is not just imprecision, but also rigidity. Although the most precise LP-based transformer could prove the assertion, \elina never computes it and always defaults to the interval heuristic for efficiency, \textit{regardless of analysis needs}. This fixed design lacks adaptability; for instance, in safety-critical settings where higher precision is crucial, transformers in the current design would offer no way to trade efficiency for precision.

\pheader{Limitation 3: Imprecision from Per-instruction design}
Libraries like \elina have hard-coded transformers for all instructions, such as assignments, across every domain they support. During analysis, these transformers are applied instruction-by-instruction. It is well known that precision in abstract interpretation is non-compositional, and that this instruction-wise analysis introduces imprecision, as relational information between variables is lost when dependent instructions are analyzed separately. Consider the program in Fig.~\ref{lst:octb}. The relation $x - y = 20$ holds initially. Since both $x$ and $y$ are updated using the current value of $y$ ($2y = y + y$), $x - y$ remains constant and $x - y = 20$ is a valid loop invariant, so the assertion holds. However, neither current libraries nor substituting their transformers with LP-based ones can prove it, as the core issue is that both assignments must be analyzed together. This limitation arises because current transformers are implemented separately for each domain and instruction. In such setup, it is infeasible to design and maintain transformers for all possible instruction sequences that may arise during analysis.

\pheader{This Work: Evolving Abstract Transformer}
To address these limitations, we move beyond fixed, hard-coded transformers and introduce the \emph{Evolving Abstract Transformer}, a novel general transformer that reformulates the traditional approach of computing a single sound output into one that defines a parametric space of sound outputs and searches within it. This enables the transformer to adapt and \emph{evolve} toward higher-precision outputs through guided optimization. This is enabled by two core algorithms we propose, the \textit{Universal Parametric Output Space Encoder} (\algps) and the \textit{Adaptive Gradient Guidance} (\alggs), which we describe next.

\subsection{Universal Parametric Output Space Encoder (\algps) Algorithm}
While analyzing the program in Fig.~\ref{lst:octa} above, we discussed two sound lower bounds for $-x$: (1)~1, the \textit{most precise} bound obtained by minimizing $y - x$ under the current constraints (Eq~\ref{eq:overview1}), and (2)~$-9$, a much looser bound produced by the interval relaxation heuristic applied by \elina. These are not the only sound lower bounds: any value below $1$ is sound, with bounds closer to $1$ being more precise but costlier to compute (often requiring LP solving), while faster heuristics like interval relaxation yield less precise results. Since analysis tasks have varying precision and time requirements, our \textit{UPOSE} algorithm does not commit to one output but instead captures the space of sound bounds \textit{symbolically}, enabling the subsequent search procedure to adapt precision based on the available time budget. To capture this space of sound bounds, \algps exploits the notion of duality and uses the \textit{Lagrangian dual} function. For a problem of the form $\min_x f(x)\ \text{s.t.}\ Ax \le b$, the Lagrangian dual function is
$g(\boldsymbol{\lambda}) = \min_x \bigl(f(x) + \boldsymbol{\lambda}^\top(Ax-b)\bigr)$. By the weak duality theorem~\cite{boyd2004convex}, the dual function of the minimization problem in Eq.~\ref{eq:overview1} satisfies $g(\boldsymbol{\lambda}) \le c^\#$ for every feasible $\boldsymbol{\lambda}\!\ge\!0$, giving us a principled way to parametrize the space of lower bounds $g(\boldsymbol{\lambda})$ in terms of Lagrange multipliers $\boldsymbol{\lambda}$. However, using this form directly has two challenges: (i) the inner minimization defining $g(\boldsymbol{\lambda})$ is not trivial to compute efficiently, and (ii) many choices of multipliers $\boldsymbol{\lambda}$ make this minimization diverge to $-\infty$, yielding trivial meaningless bounds. \algps overcomes these challenges and constructs, for each constraint direction, a \textit{Parametric Scalar Map (PSM)} $\psm = \{\parspace, \lf\}$, where each $\param \in \parspace$ defines a \textit{finite} sound lower bound $\lf(\param)$. 

First, to efficiently compute and bound $g(\boldsymbol{\lambda})$, \algps creates the dual function $g(\boldsymbol{\lambda})$ by introducing Lagrange multipliers \emph{only for the relational} constraints (which capture inter-variable dependencies) and keeping variable bounds intact. The key motivation is that variable bounds define a bounding box over which the inner minimization can be solved cheaply in closed form, as we will see later. Introducing multipliers for all constraints would eliminate this structure, making the inner minimization hard to compute. Instead, we retain the box constraints and only dualize the relational constraints such as $\pm x \pm y \leq c$ in octagons. Most numerical domains already track sound variable bounds, and \algps also works even when some or all of these bounds are absent. For our running example (Eq.~\ref{eq:overview1}), introducing non-negative multipliers $\boldsymbol{\lambda} = (\lambda_1, \lambda_2)$ for the relational constraints
$x - y \le -1$ and $x + y \le 21$ yields:
\[
    o_{dual}
    = \max_{\lambda_1, \lambda_2 \ge 0}
      \;\; \min_{x \in [-\infty, 10],\, y \in [1,11]}
      \; y - x
      + \lambda_1(x - y + 1)
      + \lambda_2(x + y - 21)
\]

After simplifying the inner minimization, this construction gives the dual function:
\[
g(\lambda_1, \lambda_2)
    = \min_{x \in [-\infty, 10],\; y \in [1,11]}
      (\lambda_1 + \lambda_2 - 1)x
    + (-\lambda_1 + \lambda_2 + 1)y
    + \lambda_1 - 21\lambda_2
\]

Now, evaluating $g(\lambda_1, \lambda_2)$ over feasible $\lambda_1, \lambda_2 \ge 0$ yields a continuous family of sound bounds parameterized by~$\boldsymbol{\lambda}$. But as mentioned above, certain choices of $\lambda_1$ and $\lambda_2$ make the inner minimization diverge to~$-\infty$, which is a sound but trivial lower bound. Hence, while computing~$g$, \algps also derives additional constraints on the multipliers to ensure that the resulting parametric space contains only sound and \textit{finite} lower bounds meaningful for downstream analysis. To compute $g(\lambda_1,\lambda_2)$, \algps uses the fact that the retained variable bounds make the inner minimization over an interval hyperbox. This structure allows a closed-form minimum (which would not be possible without the bounds) by treating each variable separately and computing its contribution from its bounds. We start with $\psm = \{\parspace, \lf\}$, where $\parspace$ initially includes $\lambda_1, \lambda_2 \ge 0$ and $\lf = 0$, and we iteratively update it with additional constraints while 
accumulating each variable’s contribution to $L$ as follows:
\begin{enumerate}[leftmargin=*]
    \item The $x$ term $(\lambda_1+\lambda_2-1)x$ is minimized over $x \in [-\infty,10]$. Since the lower bound is unbounded, finiteness requires $(\lambda_1+\lambda_2-1) \le 0$, preventing the minimum from diverging. We add this constraint to $\parspace$ and the contribution $\lf_x = 10(\lambda_1+\lambda_2-1)$ to $\lf$.
    \item The $y$ term $(-\lambda_1+\lambda_2+1)y$ is minimized over $y \in [1,11]$, where both bounds are finite, so the result is finite and $\parspace$ remains unchanged. The minimum depends on the sign of the coefficient: if $(-\lambda_1+\lambda_2+1) \ge 0$, it occurs at $y=1$; otherwise, at $y=11$. It is easy to verify that these cases combine into the closed form $L_y = -5|\lambda_2-\lambda_1+1| + 6(\lambda_2-\lambda_1+1)$ which we add to $\lf$.
\end{enumerate}
Adding the constant term $(\lambda_1 - 21\lambda_2)$ to $\lf$ yields the final scalar map $\psm = \{\parspace, \lf\}$, where, after simplification, $\lf = 5\lambda_1 - 5\lambda_2 - 5|\lambda_2 - \lambda_1 + 1| - 4$ and $\parspace = \{\lambda_1 \ge 0,\ \lambda_2 \ge 0,\ \lambda_1 + \lambda_2 - 1 \le 0\}$. This captures the space of sound lower bounds for our problem (Eq.~\ref{eq:overview1}). Collecting such maps for all template directions yields the \textit{parametric space of sound outputs} (details in Sec.~\ref{sec:upose}). As discussed in Sec.~\ref{sec:upose}, the constructed output space has the property that setting all parameters (multipliers) to zero recovers the interval relaxation result. In this example, $\boldsymbol{\lambda}=(0,0)$ satisfies $\parspace$ and gives the finite lower bound $L=-9$, matching the interval bound discussed earlier. The parametric space also contains the most precise lower bound for linear operators (like the linear assignment in our example). Choosing $\boldsymbol{\lambda}^* = (1,0)$ yields $L=1$, matching the most precise bound discussed above.

\pheader{Construction for quadratic case}
The example discussed above demonstrates the construction for a linear operator. However, the \algps algorithm also supports quadratic operators. For instance, if the operator at Line~7 is replaced by the quadratic update $x := -x^2$, while keeping the same input bounds and relational constraints, the corresponding inner minimization becomes:
\[
\min_{x \in [-\infty,10],\; y \in [1,11]}
\; x^2 + (\lambda_1+\lambda_2)\,x + (-\lambda_1+\lambda_2)\,y + \lambda_1 - 21\lambda_2,
\quad \lambda_1,\lambda_2 \ge 0.
\]
This minimization is harder than in the linear case because the objective mixes a quadratic term $x^2$ with a linear term in~$x$, making it unclear how to separate their contributions as before. To address this, \algps introduces a split parameter~$s$ that decomposes the $x$-coefficient as $(\lambda_1+\lambda_2) = s + (\lambda_1+\lambda_2 - s)$, allowing the objective to be separated into quadratic and linear parts, which yields:
\[
\hat{g}(\lambda_1,\lambda_2,s)
=
\underbrace{\min_{x}(x^2 + s x)}_{\text{quadratic part}}
+
\underbrace{\min_{x,y}[(\lambda_1+\lambda_2-s)x + (-\lambda_1+\lambda_2)y + \lambda_1 - 21\lambda_2]}_{\text{linear part}},
\]
We have $\hat{g}(\lambda_1,\lambda_2,s) \le g(\lambda_1,\lambda_2) \le c^\#$, where the first inequality holds because the sum of independent minima is always less than or equal to the joint minimum, and the second follows from weak duality. Hence, $\hat{g}(\lambda_1,\lambda_2,s)$ defines a sound parametric space of lower bounds for the quadratic operator. Linear part is minimized as discussed above.  For the quadratic part, similar conditions for finiteness and minimized value can be deduced by checking the interval boundaries and the critical point (details in Appendix~\ref{subsec:inner-parts-eval}).  Although strong duality does not hold for nonlinear operators and the space may not include the exact optimum~$c^\#$, this formulation enables \toolname to achieve more precise invariants for non-linear assignments in practice (Sec.~\ref{sec:eval2}).

\pheader{Supports Multiple Domains and Operator Sequences}
The \algps algorithm is general and works uniformly across all Polyhedral numerical domains (Sec.~\ref{sec:background}), and supports the broad range of \emph{Quadratic-Bounded Guarded Operators (QGOs)} (Sec.~\ref{sec:upose}).  QGOs combine linear or quadratic updates with guards expressed as conjunctions of linear inequalities, covering affine assignments, quadratic updates, guarded assignments, and conditionals like \texttt{assume}.  
This formulation extends naturally beyond single assignments, as many common instruction sequences have combined semantics that lie within the QGO class and can therefore be analyzed jointly.  
For instance, the sequence $[x := a + b;\; y := x \cdot c]$ yields the final updates $x = a + b$ and $y = (a + b) \cdot c$, both of which are quadratic-bounded, and the sequence lies in the QGO class. The \toolname framework (Sec.~\ref{sec:framework}) merges such sequences into blocks and uses the \algps algorithm to analyze them jointly and yield more precise invariants. This enables \toolname to infer the invariant $x - y = 20$ and prove the assertion in Fig.~\ref{lst:octb}, by analyzing the loop body $[x := x - y;\; y := 2 \cdot y]$ as a single unit.

\subsection{Adaptive Gradient Guidance (\alggs) Algorithm}
The parametric family of outputs computed by \algps is defined by $t$ Parametric Scalar Maps (PSMs), indexed by $i \in \{1, \ldots, t\}$. Each PSM $\psm_i = \{\parspace_i, \lf_i\}$ characterizes the space of sound, finite lower bounds admissible for the $i$-th constraint. The next step is to traverse this space and pick concrete bounds which can then be used for further analysis. We only need to consider non-empty $\parspace_i$ as if $\parspace_i$ is empty or infeasible, it indicates that no finite lower bound can be inferred for that direction, and the only sound bound under our construction is $-\infty$. For efficient traversal of the space, we propose the \textit{Adaptive Gradient Guidance (AGG)} (Sec.~\ref{sec:agg}) algorithm, which is a gradient-guided search procedure to select $\param_i \in \parspace_i$ suitable for analysis tasks. It takes as input a differentiable score function $\mathcal{J}_i$ that evaluates the quality of each bound $\lf_i(\param)$ for the analysis task via $\mathcal{J}_i(\lf_i(\param))$, as well as a number of gradient steps $\mathcal{R}$ that controls the runtime of the search. Exploiting the piecewise differentiability of the constructed space~$\lf$ (discussed in Sec~\ref{sec:upose}), AGG uses the gradient~$\nabla \mathcal{J}_i(\lf_i(\param))$ to efficiently navigate it, steering toward high-scoring bounds while keeping~$\param$ within the feasible region~$\parspace_i$. When optimizing for precision, where higher lower bounds indicate greater precision, the score function $\mathcal{J}_i$ can simply be set to $\lf_i$, and AGG performs gradient ascent to obtain tighter sound bounds within the feasible region. For instance, the PSM for our example above has  
$\lf(\lambda_1,\lambda_2) = 5\lambda_1 - 5\lambda_2 - 5|\lambda_2 - \lambda_1 + 1| - 4$  
with $\parspace = \{\lambda_1 \ge 0,\; \lambda_2 \ge 0,\; \lambda_1 + \lambda_2 \le 1\}$.  
To obtain more precise bounds (i.e., higher~$\lf$), the Adaptive Gradient Guidance (AGG) algorithm starts from the interval relaxation baseline $(\lambda_1,\lambda_2) = (0,0)$ and performs gradient-guided updates that increase~$\lf$ within~$\Theta$:  
\[
((\lambda_1,\lambda_2),\lf):\;
((0,0),-9)\; \Rightarrow\;
((0.3,0),-6)\; \Rightarrow\;
((0.6,0),-3)\; \Rightarrow\;
((1,0),1)\; \text{(most precise)}.
\]

\begin{figure}
    \centering
    \includegraphics[width=1\linewidth]{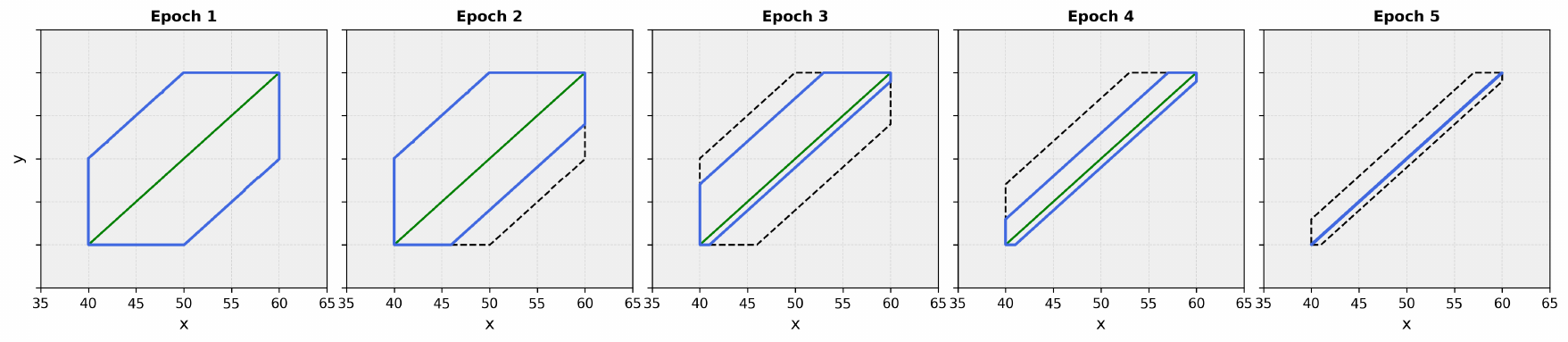}
\caption{Transformer output (blue) converging to most precise output (green) over successive gradient steps.}
    \label{fig:overview-epochs}
\end{figure}

The AGG procedure converges to the most precise bound in just three iterations. Figure~\ref{fig:overview-epochs} illustrates the per-step progression of AGG as it converges to the most precise invariant $x - y = 20$ during the analysis of program in Fig.~\ref{lst:octb}. The evaluation results in Section~\ref{sec:eval1} also demonstrate that AGG enables finding the most precise outputs significantly faster than solver-based baselines, highlighting the efficacy of gradient-guided search in efficiently exploring the parametric space.
\section{Construction of Parametric Space of Sound Outputs}
\label{sec:upose}

In this section, we formalize the notion of a parametric space of sound outputs and describe how the \emph{Universal Parametric Output Space Encoder (\algps)} algorithm systematically constructs such a space. The algorithm works for any operator whose overall effect can be expressed as a \emph{Quadratic-Bounded Guarded Operator (QGO)}. We begin by defining the notion of an \emph{Effective Update Map (EUM)}, which compactly represents the symbolic effect of an operator on program variables.

\begin{definition}[Effective Update Map]
Let $\vars$ be the finite set of program variables and $\op$ a concrete operator (atomic or composite). The \emph{Effective Update Map (EUM)} of~$\op$ is a total function $\sigma : \vars \to \exps(\vars)$, where $\exps(\vars)$ denotes symbolic expressions over~$\vars$. Let $\mathcal{U} \subseteq \vars$ denote the set of updated variables. For each $v \in \mathcal{U}$, $\sigma(v)$ denotes its final value after applying~$\mathcal{O}$; for each unchanged variable $v \notin \mathcal{U}$, we have $\sigma(v) = v$. The map is flat, meaning every expression $\sigma(v)$ is written directly in terms of the original variables in~$\vars$ without referencing intermediate assignments.
\label{def:eum}
\end{definition}

\textbf{Quadratic-Bounded Guarded Operators.} A \emph{Quadratic-Bounded Guarded Operator (QGO)} is any concrete operator $\op$ whose semantics can be represented by a pair $(\sigtwo, \guard)$, where $\sigma_{\leq 2}$ is a \emph{quadratic-bounded effective update map} and $\mathcal{G}$ is a \emph{conjunctive linear guard}. The EUM $\sigtwo : \vars \to \mathcal{P}_{\leq 2}(\vars)$ assigns each updated variable a polynomial of total degree at most two, i.e., $\sigtwo(v_j) = \sum_{i \leq k} a_{ik} v_i v_k + \sum_i b_i v_i + c$,  
with real coefficients $a_{ik}, b_i, c$, while the guard $\guard = \{ P v \le d \}$ is a conjunction of linear inequalities which restricts the input space.  
Either component may be absent, allowing only updates when $\guard$ is empty and only guards when $\sigtwo$ is the identity. This formulation covers all commonly occurring instructions handled by current libraries like \elina~\cite{SinghPV15}, including affine assignments, quadratic updates, guarded assignments, and conditional statements like \texttt{assume}. Moreover, a key benefit of this formulation is its expressiveness: it extends beyond single instructions to sequences whose combined effect can be flattened into a quadratic-bounded map, thereby enabling joint block-level analysis.
For example, the sequence $[x := a + b;\; y := x \cdot c]$ corresponds to $\sigtwo(x) = a + b$ and $\sigtwo(y) = (a + b) \cdot c$, respectively, which lies within the QGO class and can be analyzed jointly by the \algps algorithm.

\pheader{Template-Constrained Outputs}  
The \algps algorithm constructs a \textit{template-constrained} space of outputs, i.e. each output in the space is expressed over a fixed Template Constraint Matrix (TCM)~$\temp \in \tempspace$ (Sec.~\ref{sec:background}). For template-based domains such as Zones or Octagons, $\temp$ directly corresponds to the domain’s intrinsic constraint structure. In contrast, for expressive domains such as general Polyhedra, where templates are not fixed, any user-specified~$\temp$ can be provided; by default, the algorithm uses the Zone template. If the underlying abstract domain $\mathcal{D}$ supports abstract elements of arbitrary shapes, as in the Polyhedra domain, the generated outputs can be used directly. Otherwise, the domain must provide a \emph{sound conditional operator}, which soundly intersects abstract elements with arbitrary linear constraints. In such cases, the generated outputs can be used in two ways: (i) to refine the result of a baseline analyzer by intersecting its output with the generated output constraints, or (ii) independently by intersecting the constraints with~$\top$, yielding an abstract element $\absvalf{\mathcal{D}} \in \mathcal{D}$ that over-approximates $\absvalout$, i.e., $\gamma(\absvalout) \subseteq \gamma(\absvalf{\mathcal{D}})$.

\pheader{Parametric Space of Outputs} In template domains, an abstract element is of the form $\absval = \{\temp\mathbf{v} \ge \mathbf{c}\}$, characterized by scalar lower bounds~$\mathbf{c} = [c_1, \ldots, c_t]^T$, where each~$c_i$ corresponds to constraint~$\temp_i$. Instead of fixing these bounds to a single value, we allow them to vary as functions of parameters, thereby representing a space of possible abstract elements in a unified form. This is captured by \emph{Parametric Scalar Maps} (PSMs):

\begin{definition}[Parametric Scalar Map]
A Parametric Scalar Map (PSM) is a pair $\psm = (\parspace, \lf)$ where $\parspace \subseteq \real^d$ is a parameter space, and $\lf : \parspace \to \real$ maps each $\param \in \parspace$ to a finite scalar value.
\label{def:parameter-scalar-map}
\end{definition}

Each template constraint is assigned a PSM~$\psm_i$, which encodes a space of admissible values for the corresponding bound~$c_i$. A collection of such PSMs, one for each template direction, is used to define our parametric space of outputs as follows.




\begin{definition}[Parametric Space of Abstract Elements]
\label{def:parametric-space}
Let $\temp \in \tempspace$ be a fixed template matrix. Consider the ordered collection of $t$ PSMs $\overline{\psm} = [\psm_1, \ldots, \psm_t]$ where ~$\psm_i = (\parspace_i, \lf_i)$. Each $\psm_i$ is extended to a total map $\lf_i' : \real^{d_i} \to \real \cup \{-\infty\}$ defined as $\lf_i'(\param_i) = \lf_i(\param_i)$ if $\param_i \in \parspace_i$, and $\lf_i'(\param_i) = -\infty$ otherwise. The collection $\overline{\psm}$ induces the following parametric space of elements:
\[
\elemspace(\overline{\psm})
= \lambda \boldsymbol{\param}.
\;\{\temp_i \mathbf v \ge \lf_i'(\param_i)
\mid i = 1,\ldots,t \}
\]

Fixing choices of parameter $\param$, such as, ${\boldsymbol{\param}}=(\hat{\param}_1,\ldots,\hat{\param}_t)$ generates abstract elements
$\elemspace(\overline{\psm},{\boldsymbol{\param}})
= \{\temp_i \mathbf v \ge \lf_i'(\hat{\theta}_i)
\mid i=1,\ldots,t\}$.

\end{definition}

Intuitively, each PSM~$\psm_i$ specifies the finite values that the bound~$c_i$ can take.
When~$\param_i \notin \parspace_i$, $c_i$ defaults to~$-\infty$, effectively removing the constraint since it always holds. If~$\parspace_i$ is empty, $c_i$ has no finite value, no output can have finite $c_i$ and the entire space omits the corresponding constraint.

\begin{example} Suppose the template~$\temp$ includes two rows corresponding to the constraints $x + y \ge c_1$ and $x - y \ge c_2$. Consider the two Parametric Scalar Maps:
\[
\psm_1 = \left( \parspace_1 = \left\{ \lambda_1 \geq 1,\; \lambda_2 \geq 1,\; \lambda_1 + \lambda_2 \geq 1 \right\},\; \lf_1(\lambda_1, \lambda_2) = 2\lambda_1 + \lambda_2 \right),
\]
\[
\psm_2 = \left( \parspace_2 = \left\{ \lambda_3 \geq 1,\; \lambda_4 \geq 1,\; \lambda_3 + \lambda_4 \leq 1 \right\},\; \lf_2(\lambda_3, \lambda_4) = 3\lambda_3 + 2\lambda_4 \right)
\]
$\elemspace([\psm_1,\psm_2])$ defines a parametric space of elements with each parameter tuple~$(\lambda_1,\lambda_2,\lambda_3,\lambda_4)$ defining one element in this space. For $\psm_1$, feasible parameters yield finite bounds; for instance, $(\lambda_1,\lambda_2)=(1,1)$ gives $x+y\ge3$ and $(2,1)$ gives $x+y\ge5$.  
For $\psm_2$, no feasible parameters exist since $\lambda_3,\lambda_4 \ge 1$ and $\lambda_3+\lambda_4 \le 1$ cannot hold together, so every element in the space has $x-y \ge -\infty$.
\end{example}

A parametric space is \textit{sound space of outputs} for a given operator~$f$ and input element~$\absvalin$ if all possible instantiations (all abstract outputs in the space) over-approximate the effect of~$f$ on~$\absvalin$.

\begin{definition}[Soundness of a Parametric Space]
\label{def:soundness-space}
A parametric space~$\elemspace(\overline{\psm})$ for a template~$\temp$ is a sound space of outputs for an input abstract element~$\absvalin$ and concrete operator~$f$ if, for all parameter tuples~$\boldsymbol{\param}$,
$f(\gamma(\absvalin) \subseteq \gamma(\elemspace(\overline{\psm}, \boldsymbol{\param}))$.
\end{definition}

\begin{algorithm}[t]
\caption{Universal Parametric Output Space Encoder (\algps) Algorithm}
\label{algo:algps}
\begin{algorithmic}[1]
\State \textbf{Input:} Input~$\absvalin = \{A\mathbf{v} \le b\}$, QGO operator $\op = \{\sigtwo, \guard = P\mathbf{v} \le d\}$, Template $\temp \in \tempspace$
\State $\overline{\psm} \!\gets\! []$
\State $\efc \!\gets\! \{A\mathbf{v} \le b\} \cup \{P\mathbf{v} \le d\}$
\Comment{Effective Output Constraints (Input and Guard)}
\For{$i = 1$ to $t$}
  \State $\eff \!\gets\! \textsc{GetEffectiveObjective}(\temp_i, \sigtwo)$
  \Comment{Updated value of constraint $\temp_i$}
  \State $\psm_i \!\gets\! \textsc{ParametricLowerBound}(\eff, \efc)$
  \Comment{Duality-based parametric bounding of $\eff_i$}
  \State $\overline{\psm}.\texttt{append}(\psm_i)$
\EndFor
\State \Return $\elemspace(\overline{\psm})$
\Comment{Construct Parametric Space with the bounding PSMs}
\end{algorithmic}
\end{algorithm}

The \algps algorithm (Alg.~\ref{algo:algps}) constructs a sound parametric space of outputs in template~$\temp$ for an input abstract element~$\absvalin = \{A\mathbf{v} \le b\}$ from polyhedral domain~$\mathcal{D}$ and a QGO operator~$\op =\{\sigtwo, \; \guard = P\mathbf{v} \le d\}$. It works in two steps: first, it computes \textit{effective objectives and constraints} that capture how each template constraint transforms under~$\op$; second, it derives the PSM set $\overline{\psm} = [\psm_1, \ldots, \psm_t]$, where each PSM $\psm_i$ \textit{soundly lower bounds} the $i^{th}$ template constraint. The resulting set~$\overline{\psm}$ defines the parametric output space $\elemspace(\overline{\psm})$. We detail these steps below:

\pheader{Step 1: Computing Effective Objectives and Constraints}
First, \algps collects the \emph{effective constraints} $\efc$ by combining the constraints in the input $\{A\mathbf{v} \le b\}$ with those introduced by the operator $\{P\mathbf{v} \le d\}$, i.e.
$\efc = \{\mathbf{v} \mid \bar{A}\mathbf{v} \le \bar{b}\}$,  where  
$\bar{A} = \big[\! \begin{smallmatrix} A \\ P \end{smallmatrix} \!\big]$ and  
$\bar{b} = \big[\! \begin{smallmatrix} b \\ d \end{smallmatrix} \!\big]$.
Intuitively, $\efc$ captures the joint feasible region over which subsequent bounds are computed. Next, for each template constraint direction~$\temp_i$, \algps constructs an \emph{effective objective} $\eff_i$ that captures how the corresponding constraint evolves under the QGO~$O$. Each $\eff_i$ is obtained by applying the QGO’s update map~$\sigtwo$ to the linear form~$\temp_i\mathbf{v}$, yielding $\eff_i = \sum_j \temp_{i,j} \cdot \sigtwo(v_j)$. Since~$\temp_i$ is linear and~$\sigtwo$ produces expressions of degree at most two, each~$\eff_i$ remains quadratic-bounded. For example, let the input element be $\absvalin=\{\,0\le d\le2,\;0\le e\le1\,\}$, the operator~$O$ have updates $\sigtwo(x)=d^2$, $\sigtwo(y)=3d+e$, and guard $\mathcal{G}=\{\,d+e\le5\,\}$,  
and the template row be $\temp_i=(1,1)$ (i.e., $x+y$).  
Then the effective objective is $\eff_1=d^2+3d+e$, and the feasible region is 
$\efc=\{\,0\le d\le2,\;0\le e\le1,\;d+e\le5\,\}$.

\pheader{Step 2: Deriving Parametric Lower Bounds}
The most precise lower bound for template constraint $\temp_i$ can be computed by solving the minimization problem: $c_i^\# = \min_{\mathbf{v} \in \efc} \eff_i(\mathbf{v})$. But this is computationally expensive and returns only a fixed value. To obtain a space of sound bounds, we apply our \emph{Parametric Bounding Procedure} (App.~\ref{app:derivation}), which takes  $(\eff_i, \efc)$ as input and returns a \textit{sound lower-bound PSM} $\psm_i = (\parspace_i, \lf_i)$ satisfying  
$\lf_i(\param_i) \le c_i^\#$ for all $\param_i \in \parspace_i$. This means that each value produced by~$\lf_i$ forms a valid lower bound on~$c_i^\#$, compactly representing a space of sound bounds for the~$i^{\text{th}}$ template constraint. The procedure derives this space symbolically in three key steps:

\begin{enumerate}[leftmargin=*]

\item \textbf{Dual Construction with Domain-Aware Simplification:}
The first step is to construct the Lagrangian dual of the problem: $\min_{\mathbf{v} \in \efc} \eff_i(\mathbf{v})$. Since abstract domains already provide tight variable bounds, we treat these as box constraints and introduce Lagrange multipliers only for the remaining relational ones, yielding:  
$\max_{\boldsymbol{\lambda} \ge 0}\;\min_{\mathbf{v} \in [\mathbf{l}, \mathbf{u}]}\big(\eff_i(\mathbf{v}) + \boldsymbol{\lambda}^\top(\bar{A}\mathbf{v} - \bar{b})\big)$. This also works when some or all bounds are missing by using~$\pm\infty$ as limits. Here, the result of inner minimization is known as the dual function $g(\boldsymbol{\lambda})$. Considering the dual function has a key advantage: by the Weak Duality Theorem~\cite{boyd2004convex}, its value for any feasible~$\boldsymbol{\lambda}$ serves as a certified lower bound on the primal optimum $c_i^\#$, providing a principled way to obtain a space of sound lower bounds, parametrized by the dual variables $\boldsymbol{\lambda}$. Furthermore, the box-constrained structure allows $g(\boldsymbol{\lambda})$ to be computed symbolically and enables simplifications not possible in general quadratic programs (details of dual construction in Appendix ~\ref{subsec:eff-dual}).

\item \textbf{Symbolic Decomposition via Coefficient Splitting:}  
Another insight is that although solving the inner minimization defining~$g(\boldsymbol{\lambda})$ is difficult due to bilinear and indefinite quadratic terms, we can introduce symbolic parameters~$(\mathbf{S}, \mathbf{D})$ to \emph{redistribute linear coefficients} across the objective:
\[
\underbrace{\min_{\mathbf{v}}
\big(
\mathbf{v}^\top H \mathbf{v} + \mathbf{q}^\top \mathbf{v}
+ \boldsymbol{\lambda}^\top A \mathbf{v}
\big)}_{g(\boldsymbol{\lambda})}
\;\rightsquigarrow\;
\underbrace{\min_{\mathbf{v}} \sum_{i<j} (H_{ij} v_i v_j + D_{ij}^i v_i + D_{ij}^j v_j)
+ \min_{\mathbf{v}} \sum_i (Q_i v_i^2 + S_i v_i)
+  \cdots}_{\hat{g}(\boldsymbol{\lambda}, \mathbf{S}, \mathbf{D})}
\]
Here, $H_{ij}$ and $Q_i$ are fixed coefficients from the original quadratic form, while $D_{ij}^i$, $D_{ij}^j$, and $S_i$ are symbolic split parameters that control the redistribution of linear terms. This coefficient splitting decomposes the objective into independent 1D and 2D subproblems that can be minimized efficiently in isolation (Appendix ~\ref{subsec:inner-min}). The decomposition is sound: by minimizing each subproblem independently, we obtain a value less than or equal to the true joint minimum, thus yielding a valid lower bound on the original objective: $\forall\, \boldsymbol{\lambda} \!\ge\! 0,\, \mathbf{S}, \mathbf{D}.\;
\hat{g}(\boldsymbol{\lambda}, \mathbf{S}, \mathbf{D})
\le g(\boldsymbol{\lambda}) \le c_i^\#.$

\item \textbf{Tractable, Modular Evaluation of Subproblems:}  
Each resulting subproblem is now either a linear, bilinear, or quadratic expression over a box (1D or 2D), which can be evaluated using closed-form expressions or cheap bounded minimization routines. Some subproblems may have unbounded minima over all~$\boldsymbol{\lambda} \ge 0$, but we aim to capture the space of non-trivial (finite) bounds and thus derive additional conditions on~$\boldsymbol{\lambda}$ under which they minimum remains bounded, and add these to the feasible region~$\parspace_i$. The resulting symbolic expression for minimum is captured in $\lf_i$, thus yielding the lower bounding PSM $\psm_i = \{\parspace_i, \lf_i\}$ (Appendix~\ref{subsec:inner-parts-eval}).
\end{enumerate}

Having defined the construction of the PSMs~$\overline{\psm}$ that collectively define the parametric output space $\elemspace(\overline{\psm})$, we now formally establish the soundness and precision guarantees.

\begin{theorem}[Soundness of the Constructed Parametric Space]
\label{thm:soundness}
Let $\elemspace(\overline{\psm})$ be the parametric space of elements constructed by the \algps algorithm for an input element $\absvalin$, QGO operator $\op$ and template $\temp$, where $\overline{\psm} = [\psm_1, \ldots, \psm_t]$ is the collection of PSMs $\psm_i$ obtained by lower bounding effective objectives $\eff_i$.
Then, $\elemspace(\overline{\psm})$ is a sound space of outputs (Def. \ref{def:soundness-space}) for~$\mathtt{a}_{\mathit{in}}$ and~$\op$.
\end{theorem}

\begin{proof}[Proof]
Assume that $\elemspace(\overline{\psm})$ is not a sound space of outputs for $\absvalin$ and $\op$. Then, by Definition~\ref{def:soundness-space}, there exists $\hat{\boldsymbol{\param}}$ such that $f(\gamma(\absvalin)) \nsubseteq \gamma(\elemspace(\overline{\psm},\hat{\boldsymbol{\param}}))$. For the QGO $\op = \{\sigtwo, \guard\}$, this means there is $\mathbf{v} \in \gamma(\absvalin)$ with $\guard(\mathbf{v})$ and
$\mathbf{v}' = \sigtwo(\mathbf{v})$ such that $\mathbf{v}' \notin \gamma(\elemspace(\overline{\psm},\hat{\boldsymbol{\param}}))$.
This implies that some template constraint~$j$ is violated, i.e., $\temp_j\mathbf{v}' > \lf_j(\hat{\param}_j)$. By Theorem~\hyperref[thm:soundness-fspb]{C.1} in Appendix, we have $\lf_j(\hat{\param}_j) \le c_j^\#$ (each PSM encodes lower bounds on the minimum value of its  template constraint), and since $c_j^\#$ is the minimum of $\temp_j\mathbf{v}'$ over all reachable states, it follows that $c_j^\# \le \temp_j\mathbf{v}'$, leading to a contradiction.
Thus, such a violation cannot occur, and $\elemspace(\overline{\psm})$ is sound space of outputs for~$\absvalin$ and~$\op$.
\end{proof}

\begin{theorem}[Most Precise Output for Linear Case]
\label{thm:lin-precise}
For QGO operators $\op$ with linear $\sigtwo$, the space
$\elemspace(\overline{\psm})$ contains the most precise abstract output.
That is, for each PSM $\psm_i$, there exists a parameter instantiation
that yields the most precise bound $c_i^\#$.
\end{theorem}

\begin{proof}[Proof]
This follows by Theorem~\hyperref[thm:lin-tightness]{C.2} in Appendix, which uses Strong Duality~\cite{boyd2004convex} to show that, for linear cases, the induced space captures the optimum, i.e., there exists $\theta \in \Theta$ such that $L(\theta) = c_i^\#$.
\end{proof}

\textbf{Other key properties of \algps.}
Beyond soundness and precision, the parametric space $\elemspace(\overline{\psm})$ constructed by our \algps algorithm has the following key features:
\begin{enumerate}[leftmargin=*]
    \item \textbf{Interval Relaxation at $\vec{0}$}: The abstract element $\elemspace(\overline{\psm}, \vec{0})$, obtained by setting $\boldsymbol{\param} = \vec{0}$, corresponds to the output of the \emph{interval relaxation} heuristic used commonly in existing libraries, which computes bounds by simply discarding inter-variable constraints and only using the interval bounds of the variables (By Theorem \hyperref[thm:interval-start]{C.2} proved in Appendix).
    \item \textbf{Differentiable, Polyhedral Parameterization:}
    All PSMs $\psm_i = (\parspace_i, \lf_i)$ are such that the parameter space $\parspace_i$ is polyhedral, i.e., of the form $\{\mathbf{A}_i\param_i \le \mathbf{b}_i\}$ 
    (Theorem~\hyperref[thm:theta-polyhedral]{C.4} in Appendix), and each map $\lf_i(\theta_i)$ is piecewise differentiable 
    (Theorem~\hyperref[thm:L-differentiable]{C.5} in Appendix). 
    This structure makes the space easy to traverse and enables the use of gradient-based optimization techniques.
    \item \textbf{Extensibility to Joins and Disjunctions:} The construction extends naturally to joins and disjunctions by constructing per-branch parametric spaces and combining them via a $\min$ over the joint parameter space, which remains polyhedral and piecewise differentiable, allowing AGG to be applied directly (details in Appendix \hyperref[app:joins]{E}).
\end{enumerate}

\section{Gradient-Guided Exploration of Parametric Output Space}
\label{sec:agg}

Once we have a space of sound outputs $\elemspace(\overline{\psm})$, the next step is to traverse this space and find a parameter vector $\boldsymbol{\param}$, which yields the abstract element $\elemspace(\overline{\psm}, \boldsymbol{\param})$ suitable for analysis. Each PSM $\psm_i = \{\parspace_i, \lf_i\}$ defines the parameter space $\parspace_i$ to search, where each $\param_i \in \parspace_i$ yields a finite bound for the $i^{th}$ template direction. Template directions with empty $\parspace_i$ always have the trivial $-\infty$ bound and don't require search. Any efficient search procedure should be guided by the analysis requirements of the downstream task, for which it must account for the following components:

\begin{enumerate}[leftmargin=*]

\item \textbf{Search Objective ($\mathcal{J}$):}
The search procedure should allow downstream tasks to specify a score function $\mathcal{J}$ that measures the \textit{quality} of candidate outputs, where higher scores correspond to outputs that better suit the analysis goal. For each constraint~$i$, the score function assigns a quality score~$\mathcal{J}(L_i(\theta_i))$ to every possible lower bound in the set~$\mathcal{L}_i = \{L_i(\theta_i) \mid \theta_i \in \Theta_i\}$. This score can help guide the search process by steering it toward parameter values~$\theta_i$ that yield more desirable outputs, i.e., pick lower bounds $L_i(\theta_i)$ with higher $\mathcal{J}(L_i(\theta_i))$. Some examples of such score functions include:
\begin{enumerate}
    \item \emph{Precision Objective $\mathcal{J}_{\mathit{prec}}$:}
    In general program analysis settings, the goal is to infer the most precise invariants. For example, consider a constraint of the form $x - y \geq L(\theta)$ for $\theta \in \Theta$. The most precise invariant is obtained by choosing $\theta^* \in \Theta$ that maximizes $L(\theta)$, since $x - y \geq c_1$ is strictly more precise than $x - y \geq c_2$ whenever $c_1 > c_2$. This reflects the general principle that larger lower bounds yield tighter (i.e., more precise) abstract outputs. Accordingly, one can define the score function as $\mathcal{J}_{\mathit{prec}}(\theta_i) = L_i(\theta_i)$, so that maximizing $\mathcal{J}_{\mathit{prec}}$ directly corresponds to improving precision.

    \item \emph{Target Objective $\mathcal{J}_{\mathit{tgt}}$:}
    Beyond computing the most precise invariants, program analysis tasks may require selecting outputs that satisfy a given specification. For example, consider a constraint of the form $x + y \geq L(\theta)$ for $\theta \in \Theta$, and suppose the goal is to prove $x + y \geq e$. This reduces to finding $\theta^* \in \Theta$ such that $L(\theta^*) \ge e$. Accordingly, one can define a score function that measures satisfaction of this condition, such as the violation-based score $\mathcal{J}_{\mathit{tgt}}(\theta_i) = -\max(0,\, e_i - L_i(\theta_i))$, so that higher scores correspond to smaller violations and the maximum is attained when $L_i(\theta_i) \ge e_i$. This example also highlights how the search objective $\mathcal{J}$ enables adaptive search: rather than always computing the most precise output, it can guide the search toward outputs \textit{sufficient} to establish the desired property or meet the task requirements. Techniques proposed by works like DL2~\cite{dl2} can further be used to encode more complex specifications as differentiable objectives.
\end{enumerate}

\item \textbf{Runtime Control ($\mathcal{R}$):} The search procedure should also expose mechanisms that allow tasks to regulate the computational effort during search by defining some form of runtime budget $\mathcal{R}$. This enables tasks with different runtime constraints to choose appropriate configurations. Examples of such budgets can include search parameters, such as the number of samples, gradient steps, or epochs, or more coarse-grained options like wall-clock timeouts. Varying this budget can allow the tasks to find the desired tradeoff between output quality and runtime.
\end{enumerate}

\textbf{Limitations of Existing Search Strategies.}
A naive search strategy is to generate outputs by randomly sampling parameters from the feasible space $\parspace$. Runtime $\mathcal{R}$ can be controlled by limiting the number of samples, and the best output is selected by measuring the score function $\mathcal{J}$. However, this search is unguided and not  \textit{objective-aware}, often failing to find high-scoring outputs. Moreover, efficient sampling is difficult when $\parspace$ is high-dimensional or defined by complex constraints. On the other hand, a highly objective-aware strategy is to use solvers to directly maximize $\mathcal{J}_i(\lf_i(\param_i))$ over each $\parspace_i$.  This generates high-quality outputs but is expensive in practice, as it requires solving one optimization problem for each constraint direction $i$, resulting in many solver invocations. Although solvers provide timeout mechanisms, these apply per call, while downstream tasks usually impose a global runtime budget $\mathcal{R}$. When $\mathcal{R}$ is split across many calls, as in big programs with many template constraints, each call gets little time, often yielding no feasible or low-quality outputs. In summary, random sampling provides controllable runtime but lacks objective guidance, whereas solver-based optimization offers strong objective guidance but limited runtime control.

\pheader{Gradient-Guided Exploration}
To balance objective awareness with runtime control, an ideal search procedure should progressively move toward parameters that yield higher scores under the objective $\mathcal{J}_i$. 
Gradient-guided search naturally fits this goal: it enables incremental improvement, avoids costly solver calls or exhaustive enumeration, and improves output quality as more runtime becomes available. This is especially well-suited for our task, because as discussed in the last section,
the output $\lf_i(\param_i)$ is piecewise differentiable. Assuming the score function $\mathcal{J}_i$ can be defined differentiably, as is the case for many tasks including precision objective, we can follow the gradient $\nabla \mathcal{J}_i(\lf_i(\param_i))$ to optimize it. 
This results in an iterative procedure where the number of gradient steps serves as a natural and fine-grained control for the runtime $\mathcal{R}$. Unlike solver-based methods that require separate invocations per constraint, gradient descent 
enables joint optimization across all template directions, progressively improving output quality within the available budget.

\begin{algorithm}[t]
\caption{Adaptive Gradient Guidance (AGG) Search}
\label{algo:agg}
\begin{algorithmic}[1]
\State \textbf{Input:} score $\mathcal{J}(L(\theta))$, feasible region $\Theta = \{\hat{\mathbf{A}}\theta \le \hat{\mathbf{b}}\}$, step $\eta$, penalty $\beta$, gradient steps $R$
\State \textbf{Init:} $\theta \gets 0$, $\theta^{\text{best}} \gets 0$ \Comment{Start from interval relaxation}
\For{$r = 1$ to $R$}
  \If{$\theta \in \Theta$}
    \State $\theta \gets \theta + \eta \nabla \mathcal{J}(L(\theta))$
    \Comment{Ascend along objective inside feasible region}
  \Else
    \State $\theta \gets \theta - \eta \beta \nabla \|\max(\hat{\mathbf{A}}\theta - \hat{\mathbf{b}}, 0)\|_p$
    \Comment{Reduce violations outside feasible region}
  \EndIf
  \If{$\theta \in \Theta$ and $\mathcal{J}(L(\theta)) > \mathcal{J}(L(\theta^{\text{best}}))$}
    \State $\theta^{\text{best}} \gets \theta$
    \Comment{Record best feasible parameters found so far}
  \EndIf
\EndFor
\State \Return $\theta^{\text{best}}$
\end{algorithmic}
\end{algorithm}

However, applying standard descent methods in our setting presents several challenges. 
We have a constrained maximization problem: maximize score $\mathcal{J}(\lf(\param))$ over the parameter space $\parspace$, which is polyhedral (as discussed in the previous section) and so defined as $\hat{A}\param \le \hat{b}$. A naive gradient step $\param \leftarrow \param + \eta \nabla \mathcal{J}(\lf(\param))$ may leave the feasible region $\parspace$. 
Projected Gradient Descent (PGD) corrects this by projecting back onto $\parspace$, 
but for polyhedral regions, each projection requires solving a quadratic or linear program, making repeated updates computationally expensive. Penalty-based methods avoid projection by adding a penalty term such as $-\beta \lVert \max(\hat{A}\param - \hat{b}, 0) \rVert_p$ to the objective. 
However, this assumes $\mathcal{J}(\lf(\param))$ is defined for all $\param \in \mathbb{R}^d$, 
whereas $\lf(\param)$ is only valid within $\parspace$, making the objective ill-posed outside it. Barrier methods, such as log-barrier or interior-point techniques, maintain feasibility by penalizing boundary violations. But they require strict feasible initialization and solve large linear problems at each iteration, making them unscalable. To overcome these challenges, we propose a lightweight, projection-free algorithm called \emph{Adaptive Gradient Guidance (AGG)} (Alg.~\ref{algo:agg}). AGG avoids solver calls and does not require a strictly feasible start. It adaptively selects the update direction based on whether the current point $\theta$ lies inside the feasible region $\Theta$. 
When feasible, it ascends along the gradient of the score to improve output quality; 
when infeasible, it follows a penalty gradient that reduces constraint violations to restore feasibility.

\pheader{Monotonic Progress Over Baseline}
The AGG algorithm enables objective-aware progress within $\Theta$ while steering infeasible iterates back without projections or solver calls. As discussed in the previous section, $\theta = 0$ corresponds to the interval-relaxation output used by existing libraries, so we start the search from this point as a natural baseline. The algorithm tracks the best feasible parameters after each gradient step, making the resulting progress \emph{monotonic}: the returned output is never weaker than the baseline, and increasing the search budget can only preserve or improve the result. After the allotted number of gradient steps, the procedure returns the best-scoring parameters found. For linear cases, since the parametric space contains the most precise output (Theorem~\ref{thm:lin-precise}), AGG can, in principle, recover this output as well. This is demonstrated in Section~\ref{sec:eval1}, where our gradient-guided search efficiently finds the most precise outputs in practice.
\section{\toolname Framework}
\label{sec:framework}

We implement the ideas discussed above in \toolname\footnote{https://github.com/uiuc-focal-lab/AbsEvolve} framework. It is built upon the \clam\footnote{https://github.com/seahorn/clam} analyzer (part of SeaHorn~\cite{seahorn} framework) and the numerical abstract domain library \elina~\cite{SinghPV17}. As discussed above, the \algps algorithm works for a wide range of QGO operators. For our implementation, we focus on on assignment operators, which are common in real-world programs. This includes assignment instructions that fall in QGO class such as (1) Affine assignments: $v_j := \sum_i c_i v_i + b$, with $c_i, b \in \mathbb{R}$ and (2) Quadratic assignments: $v_j := \sum_{i \leq k} a_{ik} v_i v_k + \sum_i b_i v_i + c$, with $a_{ik}, b_i, c \in \mathbb{R}$. Moreover, there can also be assignment sequences whose combined effect lies within the QGO class. The \toolname framework therefore (1) merges such sequences into \textit{blocks} for joint and more precise analysis, and (2) performs abstract interpretation on the transformed program.

\pheader{Merging Assignment Sequences into Blocks}
Frameworks like \clam analyze programs instruction-by-instruction, as implementing abstract transformers for entire sequences is challenging. In contrast, \toolname uses a merging algorithm that first identifies assignment sequences~$\seq$ whose combined updates remain within the QGO class; that is, sequences whose Effective Update Map (Definition~\ref{def:eum}) is quadratic-bounded. Each such sequence is then merged into a single block~$\block(\seq)$ for joint analysis. For example, as illustrated in Figure~\ref{fig:split-eam}, \toolname identifies two quadratic-bounded sub-sequences~${\seq}_1$ and~${\seq}_2$ within the larger sequence~$S$, and merges them into blocks. Note that if all the assignments were merged together, the resulting EUM would have $\sigma(z) = (2 * a * b * c - a)$ which is cubic and violates the quadratic-bounded restriction. Hence, the merging algorithm splits the sequence.
The algorithm also provides options such as only merging quadratic assignments or setting a maximum block length (detailed algorithm in Appendix~\hyperref[algo:merge-blocks]{B}).

\begin{figure}[t]
\centering
\renewcommand{\arraystretch}{1}
{\normalsize
\[
\begin{array}{@{}r@{\quad}l@{\quad\quad}c@{\quad}c@{}}
S: &
\begin{array}{@{}l@{}}
\texttt{x := 2 * a} \\[-0.3ex]
\texttt{y := x * b} \\[-0.3ex]
\texttt{z := y * c - a} \\[-0.3ex]
\texttt{w := z + d}
\end{array}
&
\begin{aligned}
  \seq_1 \; &=\;
      \begin{array}{@{\hskip 1pt}l@{\hskip 1pt}}
        \texttt{x := 2 * a} \\[-0.3ex]
        \texttt{y := x * b}
      \end{array}
  \quad &\sigma_{\block(\seq_1)} \; =\;
    \left\{
      \begin{array}{@{}l@{\;\mapsto\;}l@{}}
        \texttt{x} \\[-0.3ex]
        \texttt{y}
      \end{array}
      \quad
      \begin{array}{@{}l@{}}
        \texttt{2 * a} \\[-0.3ex]
        \texttt{2 * a * b}
      \end{array}
    \right\}
  \\[1ex]
  \seq_2 \; &=\;
      \begin{array}{@{\hskip 1pt}l@{\hskip 1pt}}
        \texttt{z := y * c - a} \\[-0.3ex]
        \texttt{w := z + d}
      \end{array}
  \quad &\sigma_{\block(\seq_2)} \; =\;
    \left\{
      \begin{array}{@{}l@{\;\mapsto\;}l@{}}
        \texttt{z} \\[-0.3ex]
        \texttt{w}
      \end{array}
      \quad
      \begin{array}{@{}l@{}}
        \texttt{y * c - a} \\[-0.3ex]
        \texttt{y * c - a + d}
      \end{array}
    \right\}
\end{aligned}
\end{array}
\]
}

\caption{While merging assignment sequences, \toolname partitions the sequence $S$ into two sub-sequences and merges them into blocks, each of whose Effective Update Map remains quadratic-bounded. Without this partitioning, the combined update would exceed quadratic degree, violating the QGO restriction.}
\label{fig:split-eam}
\end{figure}

\pheader{Abstract Interpretation on the Transformed Program}
The merging of assignment sequences $\seq$ into blocks $\block(\seq)$ in a program $\mathcal{P}$ yields a transformed program $\mathcal{P}'$. \toolname performs abstract interpretation on this new program $\mathcal{P}'$. For affine assignments, quadratic assignments and blocks $\block(\seq)$, it uses our evolving transformer by first applying the \algps algorithm to construct a space of sound outputs and then using the \alggs algorithm to search this space via gradient-descent and select a output. This search can be controlled by user-configurable parameters such as step size~$\eta$ and number of gradient steps~$R$ used by the \alggs algorithm. Standard transformers from \elina are used for the rest of the instructions.

\toolname analyzes the transformed program $\mathcal{P}'$ but the goal is to obtain invariants for the original program $\mathcal{P}$. We argue this is sound in two steps. Let $\llbracket \mathcal{P} \rrbracket_\ell$ denote the set of concrete states reachable at program point (label) $\ell$ in $\mathcal{P}$. First, for every concrete operator in $\mathcal{P}'$, the analysis uses a sound abstract transformer: our evolving transformer for assignments and blocks (sound by Thm.~\ref{thm:soundness}) and standard \elina transformers otherwise. So, at each program point $\ell$, if the analysis computes abstract output $\absvalf{l}$, then $\llbracket \mathcal{P}' \rrbracket_\ell \subseteq \gamma(\absvalf{l})$ holds. Next, by Theorem~\ref{thm:sound-eum}, we have for each program point $\ell$ that $\llbracket \mathcal{P} \rrbracket_\ell \subseteq \llbracket \mathcal{P}' \rrbracket_\ell$. This, combined with the first step (per-step abstract soundness on $\mathcal{P}'$, i.e., $\llbracket \mathcal{P}' \rrbracket_\ell \subseteq \gamma(\absvalf{\ell})$), yields $\llbracket \mathcal{P} \rrbracket_\ell \subseteq \gamma(\absvalf{\ell})$, thus proving the soundness of our analysis.

\begin{theorem}[Soundness of Block Merging (per point)]
\label{thm:sound-eum}
Let $\mathcal{P}'$ be obtained from $\mathcal{P}$ by replacing each assignment sequence $\seq$ with a block $\block(\seq)$ whose concrete semantics is its Effective Update Map $\sigma_{\block(\seq)}$. Then, for every program point (label) $\ell$, $\llbracket \mathcal{P} \rrbracket_\ell \;\subseteq\; \llbracket \mathcal{P}' \rrbracket_\ell$.
\end{theorem}

\begin{proof}[Proof Sketch]
Using Theorem~\hyperref[thm:eum-over]{C.6} in Appendix, which proves that jointly analyzing a sequence via its EUM over-approximates step-wise semantics, we have $\llbracket \seq \rrbracket \subseteq \llbracket \block(\seq) \rrbracket$ for all replaced sequences $\seq$. As composition of semantics is monotonic, replacing an instruction by its over-approximation can only add new reachable states, and so $\ell$, $\llbracket \mathcal{P} \rrbracket_\ell \;\subseteq\; \llbracket \mathcal{P}' \rrbracket_\ell$.
\end{proof}
\section{Evaluation}
\label{sec:eval}

\pheader{Research Questions} Our evaluation addresses two research questions that assess the key benefits of our evolving transformer: its ability to flexibly trade off precision and efficiency, achieve high precision efficiently, and generalize across domains and instructions. 
\begin{itemize}[leftmargin = *]
\item \textbf{RQ1:} Can \toolname provide controllable precision–efficiency tradeoffs while also achieving the most precise invariants for linear operators efficiently?
\item \textbf{RQ2:} Does \toolname maintain this adaptability and precision for more complex cases such as quadratic operators and the Polyhedra domain?
\end{itemize}

\textbf{Experimental Setup \& Benchmarks.}
We use our \toolname framework (Sec.~\ref{sec:framework}) for experiments and compare the invariants computed by it with those computed by the state-of-the-art library \elina~\cite{SinghPV17}. All experiments were run on a machine with a 16-core Intel(R) Core(TM) Ultra 7 255H CPU @ 1.10,GHz, 31 GiB RAM, and Ubuntu 24.04.3 LTS.
As benchmarks, we use the 57 programs from the \textsc{NLA-Digbench}~\cite{digbench} suite of the SV-COMP benchmarks~\cite{svcomp}. This is one of the largest suites of programs involving nonlinear numerical invariants and is widely used~\cite{nla1, nla2, nla3} to evaluate approaches that compute numerical invariants. These benchmarks include a wide range of instructions, such as affine and quadratic assignments and their sequences, which makes them challenging for standard analyses and also allows us to test how well \toolname handles complex linear and nonlinear operators. We analyze these programs across three popular numerical domains, Zones, Octagons, and Polyhedra, and keep a timeout of 200s for each program.

\subsection{Efficient and Adaptable Analysis for Linear Operators}
\label{sec:eval1}

As discussed in Section~\ref{sec:framework}, our evolving transformer can handle both affine and quadratic operators and their sequences. However, obtaining ground-truth most-precise outputs for quadratic operators is infeasible and so for this section, we restrict \toolname to only use our evolving transformer for affine assignments and their sequences. This allows direct comparison with the most-precise outputs for affine operators. Also, since the Polyhedra domain already captures affine assignments exactly, we focus only on the Zones and Octagon domains in this section. For each domain, we first analyze all benchmark programs using \elina to compute the set of baseline invariants $\invs_{bl}$, collected at the entry of each basic block across all programs (237 in total). This analysis takes around 20s for both domains. Next, we compute the \textit{ground-truth} invariants $\invs_{gt}$ obtained using the most-precise transformer for affine assignment sequences. For linear operators, this involves solving $t$ linear programs (LPs) per transformer call (Sec.~\ref{sec:upose}), where $t$ is the number of template directions, which grows quadratically with the number of variables for Zones and Octagon domains. We use the LP-solving-based transformer~\cite{tcm_domain} by directly using Gurobi~\cite{gurobi} solver to solve these LPs efficiently. As discussed in Sec.~\ref{sec:agg}, \toolname jointly searches for bounds over all template directions, while the LP solver optimizes each direction independently. For fairness, we parallelize these LP calls across available threads. This method computes the set of most-precise \textit{ground-truth} invariants $\invs_{gt}$, taking 230s for Zones and 350s for Octagons. We also experimented with Symba~\cite{awsSolver}, an SMT-based optimizer for Linear Real Arithmetic problems, but its reliance on SMT solving made it significantly slower, taking 43 minutes for the Zones domain (timing out on 11 programs) and 56 minutes for the Octagon domain (timing out on 12).

\begin{figure}
    \centering

    \begin{subfigure}[b]{0.47\linewidth}
        \centering
        \includegraphics[width=\linewidth]{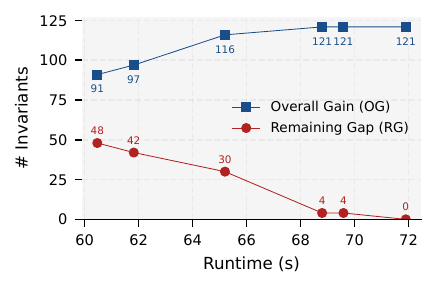}
        \includegraphics[width=0.87\linewidth]{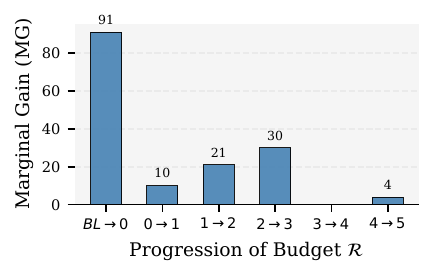}
        \caption{Zones}
        \label{fig:zones-linear}
    \end{subfigure}
    \hfill
    \begin{subfigure}[b]{0.47\linewidth}
        \centering
        \includegraphics[width=\linewidth]{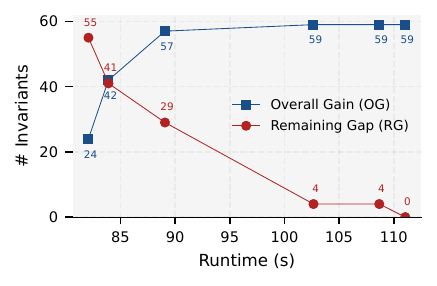}
        \includegraphics[width=0.87\linewidth]{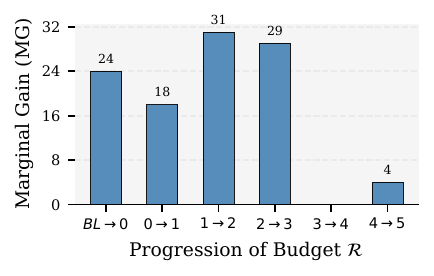}
        \caption{Octagons}
        \label{fig:octagons-linear}
    \end{subfigure}

\caption{Evolution of invariant strengthening for Zones and Octagons (linear case) as the gradient-step budget $\mathcal{R}$ varies, showing how the analysis can be adapted across settings. Analysis with smaller $\mathcal{R}$ runs faster but is less precise, while larger $\mathcal{R}$ yields stronger invariants, with RG reaching $0$ (most-precise invariants) at $\mathcal{R}=5$.}
    \label{fig:zones-oct-linear}
\end{figure}

\pheader{Monotonic Strengthening of Invariants}
Next, we analyze the benchmark programs using \toolname by varying the search
budget $\mathcal{R}$ from~0 to~5 (where $\mathcal{R}$ is the number of gradient
steps per transformer call) with $\mathcal{J}$ set to the precision objective $\mathcal{J}_{prec}$ (Section~\ref{sec:agg}). Let $\invs_r = \{a_r^1, a_r^2, \dots\}$ denote the invariants produced at budget $r$, where $a_r^i$ is the $i$-th corresponding
invariant in the set. These sets exhibit a monotonic strengthening structure:
for all $r \ge 1$ and all $i$, we have $\gamma(a_r^i) \subseteq
\gamma(a_{r-1}^i)$, so invariants never become weaker as the budget increases;
this property is enforced by the monotonicity of AGG discussed in
Section~\ref{sec:agg}. Since $r=0$ performs no gradient steps, AGG returns the interval–relaxation
baseline output used by standard libraries such as \elina, so one might expect
$\invs_0 = \invs_{bl}$. However, because \toolname performs
sequence-level reasoning (rather than instruction-by-instruction as in the
baseline), $\invs_0$ often already strengthens several baseline invariants,
i.e., $\gamma(a_0^i) \subseteq \gamma(a_{bl}^i)$. More generally, for all $r$
and $i$, we have $\gamma(a_r^i) \subseteq \gamma(a_0^i) \subseteq
\gamma(a_{bl}^i)$. To more precisely capture this strengthening behavior across budgets, we track the following metrics:

\begin{enumerate}[leftmargin=*]
    \item \textbf{Overall Gain (OG)}: This counts how many invariants in $\invs_r$ are stronger than their counterparts in $\invs_{bl}$, i.e. $OG = \sum_{i=1}^{|\invs_r|}\mathbf{I}(\gamma(\absvalf{r}^i) \subset \gamma(\absvalf{bl}^i))$, where $\mathbf{I}(\cdot)$ denotes the indicator function.

    \item \textbf{Remaining Gap (RG)}: This measures how many invariants in $\invs_r$ are weaker than those in ground-truth set $\invs_{gt}$ and gives a sense of gap from optimality: $RG = \sum_{i=1}^{|\invs_r|}\mathbf{I}(\gamma(\absvalf{gt}^i) \subset \gamma(\absvalf{r}^i))$.

    \item \textbf{Marginal Gain (MG)}: OG provides a global view of total improvement over $\invs_{bl}$ but can hide progress across budget settings, because once an invariant strengthens over the baseline, it contributes to OG only once, even if it is refined further at higher budgets. MG provides a local view by measuring how many invariants are strengthened at budget $r$ relative to $r{-}1$, i.e., $MG = \sum_{i=1}^{|\invs_r|} \mathbf{I}(\gamma(a_r^i) \subset \gamma(a_{r-1}^i))$. $\invs_{r-1}$ is taken as $\invs_{bl}$ when $r = 0$. MG shows how much additional refinement a one-step increase in search budget provides over the previous setting.
\end{enumerate}

\pheader{Adaptable Precision–Efficiency Analysis}
Figure~\ref{fig:zones-oct-linear} shows the  metrics defined above over different gradient-step budgets. The line charts show the runtimes for budgets $\mathcal{R} = 0$ to $5$ (dots left to right) along with how the overall gains from baseline (OG) and the gap to the most-precise invariants (RG) evolve as the budget increases. The accompanying bar plots report the marginal gains obtained when increasing the budget from one setting to the next. As expected, larger budgets lead to higher runtimes since more gradient steps are performed within each transformer call. However, they also yield increasingly precise invariants, as reflected by rising OG and decreasing RG trends. Although OG flattens after $\mathcal{R}=3$, the continued decrease in RG and non-zero marginal gains from $\mathcal{R}=4$ to $\mathcal{R}=5$ show that further strengthening still occurs on invariants already strengthened. Overall, these results show that the number of gradient-steps $R$ provides an effective way to adapt the analysis by tuning the precision–efficiency tradeoff: larger budgets improve precision at higher cost, while smaller budgets offer faster but less precise analyses.

\pheader{Computing the Most-Precise Invariants Efficiently}
As demonstrated by Figure~\ref{fig:zones-oct-linear}, at $\mathcal{R}=5$ \toolname achieves $RG=0$ for both Zones and Octagons, meaning that the invariant set $\invs_5$ matches the ground-truth most-precise set $\invs_{gt}$. 

\begin{wrapfigure}{r}{0.40\linewidth}
    \centering
    \vspace{-1.5em}
    \includegraphics[width=\linewidth]{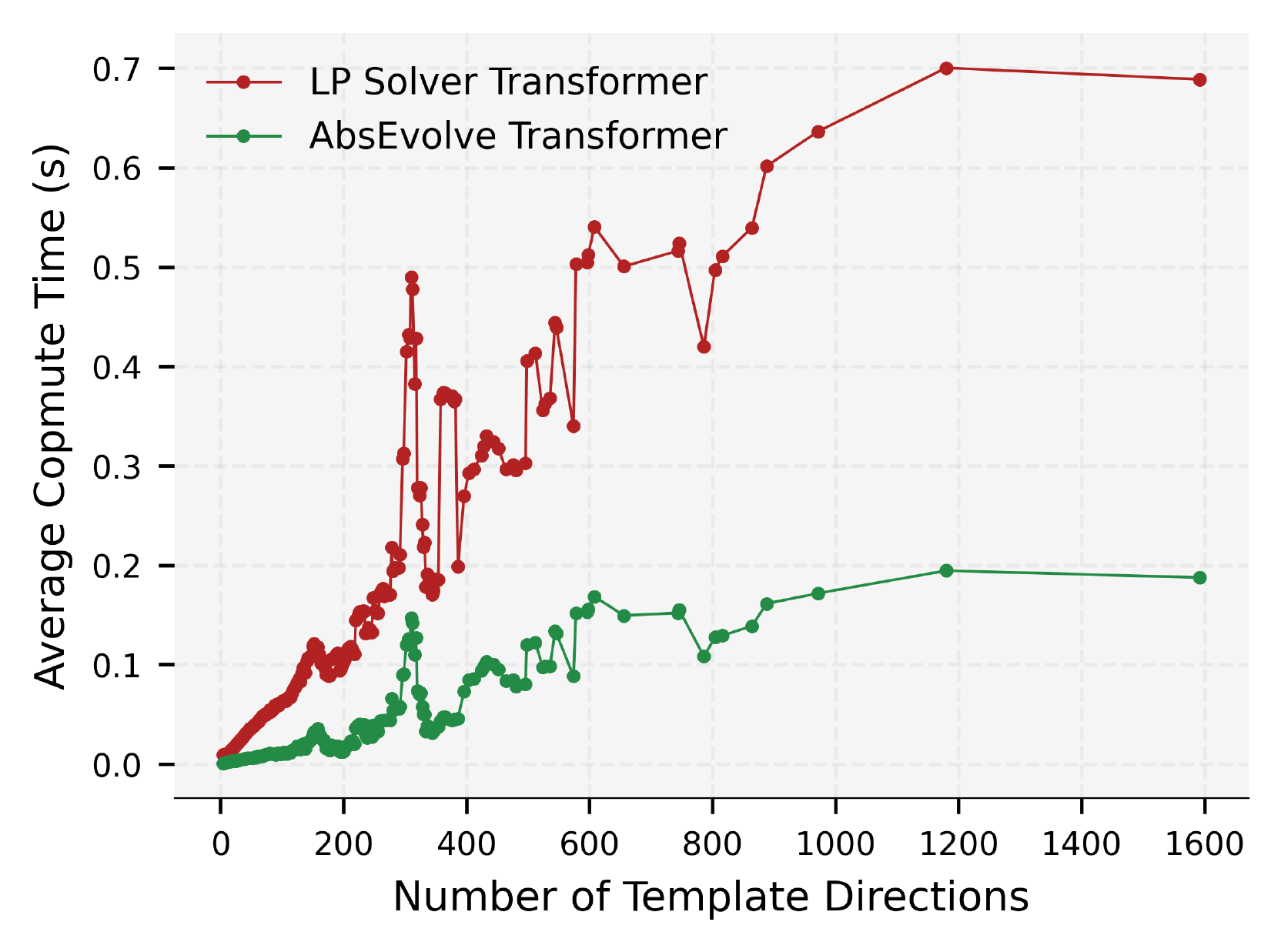}
    \vspace{-2em}
    \caption{Transformer Computer Time vs. Number of Template Directions}
    \label{fig:solver-time}
\end{wrapfigure}

\noindent Notably, \toolname reaches the most-precise invariants in only \textit{72s} for Zones and \textit{111s} for Octagons, compared to \textit{230s} and \textit{350s} for the LP-solver-based transformer, making it about \textbf{3.2$\times$ faster}. As discussed in Sec.~\ref{sec:upose}, the parametric output space contains the most-precise output for linear cases, and these results show that the \alggs algorithm not only finds this output but does so efficiently. This efficiency arises because \alggs searches for bounds jointly across all template directions using gradients, while the LP-solver-based transformer solves a separate optimization problem per direction. Figure~\ref{fig:solver-time} further confirms this, with the LP-based transformer's runtime growing steeply with the number of directions, while \toolname’s increases much more gradually.

\begin{figure}[t]
    \centering
    \begin{subfigure}[b]{0.32\linewidth}
        \centering
        \includegraphics[width=\linewidth]{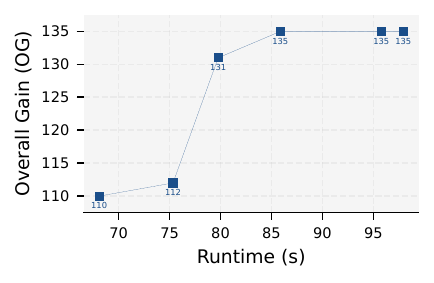}
        \vspace{0.25em}
        \includegraphics[width=0.9\linewidth]{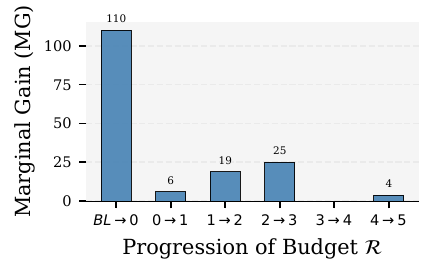}
        \caption{Zones}
        \label{fig:zones-quad}
    \end{subfigure}
    \hfill
    \begin{subfigure}[b]{0.32\linewidth}
        \centering
        \includegraphics[width=\linewidth]{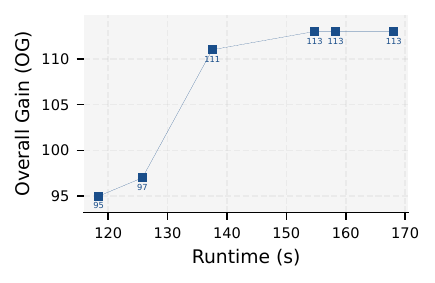}
        \vspace{0.25em}
        \includegraphics[width=0.9\linewidth]{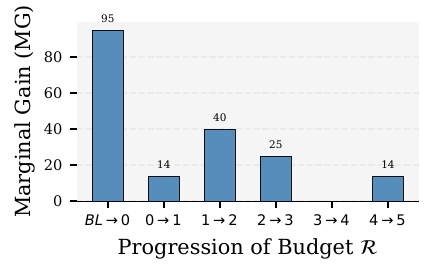}
        \caption{Octagons}
        \label{fig:oct-quad}
    \end{subfigure}
    \hfill
    \begin{subfigure}[b]{0.32\linewidth}
        \centering
        \includegraphics[width=\linewidth]{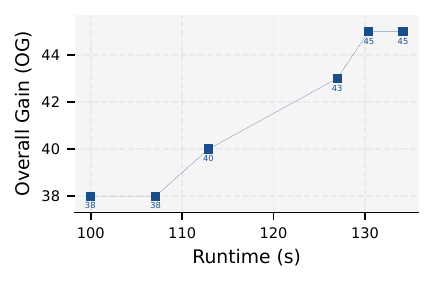}
        \vspace{0.25em}
        \includegraphics[width=0.9\linewidth]{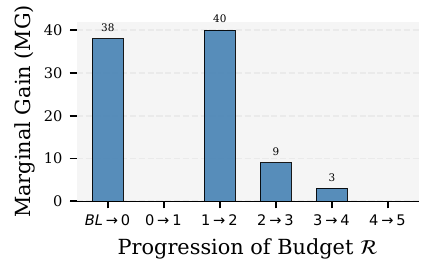}
        \caption{Polyhedra}
        \label{fig:pk-quad}
    \end{subfigure}

    \caption{Evolution of invariant strengthening while handling quadratic assignments across domains.}
    \label{fig:domains-quad}
\end{figure}

\subsection{Extending Analysis to Nonlinear Operators and Expressive Domains}
\label{sec:eval2}

\pheader{Quadratic Assignment Sequences}
Unlike in Section~\ref{sec:eval1}, where \toolname was restricted to affine assignments for ground-truth comparison, we now use it in its full setting, allowing it to apply our evolving transformer to quadratic assignments and sequences as well, as discussed in Sec.~\ref{sec:framework}. Since solvers like Gurobi are unsound for nonlinear operators~\cite{gurobi_nonlinear_unsound} and Symba~\cite{awsSolver} supports only linear arithmetic, obtaining ground-truth outputs is infeasible in this setting. As before, we run \toolname with gradient-step budgets varying from $0$ to $5$, but track only OG and MG metrics, as RG cannot be measured without ground truth. The increasing OG values in Figs.~\ref{fig:zones-quad} and~\ref{fig:oct-quad}, and the non-zero MG values ($4 \rightarrow 5$) even when OG plateaus, show that \toolname continues to strengthen invariants as the budget increases, enabling adaptable analysis even for quadratic operators. Overall gain (OG) reaches 135 for Zones and 113 for Octagons, compared to 121 and 59 in the linear case, indicating that handling quadratic operators further strengthens invariants and yields more precise analyses.

\pheader{Polyhedra Domain} Since Polyhedra is not a template-constraint (TCM) domain and supports arbitrary linear constraints, we configure \toolname to use the Zones~\cite{zones} template that tracks variable differences (\toolname\ always generates outputs within a chosen template, Sec.~\ref{sec:upose}). As Polyhedra can already represent linear updates exactly, \toolname applies the evolving transformer only to quadratic assignments and assignment sequences with overall quadratic effects. Due to numerical overflows in \elina specific to the Polyhedra domain, we run experiments on a subset of 49 benchmarks (details in Appendix ~\hyperref[sec:appeval2]{D.2}), yielding a baseline invariant set $\invs_{bl}$ of size 195. The results in Fig.~\ref{fig:pk-quad} show that \toolname enables precise and adaptable analysis even for expressive domains like Polyhedra, strengthening 45 (about 25\%) of the baseline invariants by $\mathcal{R} = 5$. demonstrating the benefits of precise (and sequence-level) reasoning for quadratic operators.

\pheader{Runtime Trends Across Domains} 
As seen in Fig.~\ref{fig:domains-quad}, the analysis time does not grow linearly with the gradient-step budget $\mathcal{R}$. This is expected, since $\mathcal{R}$ controls only the transformer outputs, while the total runtime depends on the full sequence of all transformer calls during the analysis. Moreover, as the output space is complex, the search may get stuck in local minima and produce identical outputs across consecutive budget settings. Nevertheless, our gradient-guided traversal is generally effective at escaping such plateaus. For instance, in Zones and Octagons (Figs.~\ref{fig:zones-quad} and~\ref{fig:oct-quad}), budgets $\mathcal{R}{=}3$ and $\mathcal{R}{=}4$ yield identical results, but precision improves at $\mathcal{R}{=}5$, as indicated by the non-zero MG from $\mathcal{R}{=}4$ to $\mathcal{R}{=}5$. A similar pattern appears for Polyhedra (Fig.~\ref{fig:pk-quad}), where progress resumes after stagnation between $\mathcal{R}{=}0$ and $\mathcal{R}{=}1$.

\pheader{Efficiency Comparison with \elina}
Analyzing the benchmarks with \elina takes approximately 20 seconds (18.53s for Zones, 20.11s for Octagons, and 23.2s for Polyhedra). In contrast, as shown above, \toolname incurs higher runtime, ranging from 70 to 170 seconds as the number of gradient steps increases from 0 to 5. This overhead is expected, as \toolname consistently computes more precise invariants than \elina. Importantly, \toolname enables users to navigate the precision--runtime tradeoff by selecting configurations tailored to their needs, and as discussed in the last section, it even outperforms the industrial-grade, heavily optimized LP solver Gurobi based transformer in computing the most precise invariants. Moreover, for a fair comparison on standard benchmarks, we implement \toolname within the \clam/\elina framework, incurring integration overhead due to conversions between ELINA’s internal abstract states and the tensor-based representation used for gradient-based optimization ($\approx$50\% of runtime overhead). This cost is not inherent and can be reduced with additional engineering effort by implementing a native abstract domain tailored to \toolname-style parametric transformers.

In terms of memory, \toolname introduces only modest overhead, as it does not explicitly compute all abstract outputs, but instead efficiently represents the space of sound abstract outputs \textit{symbolically} via parameters. In our evaluation, the median memory usage increases from 0.08\,GB to 0.14\,GB (max 0.23\,GB), corresponding to well under 1\% of memory on our 32\,GB machine ($\approx$0.7\% at peak), making the footprint negligible. This increase is expected, as our PyTorch~\cite{pytorch}-based gradient descent relies on tensor computations that consume some additional memory.

\pheader{Ineffectiveness of Random Sampling}
The results in Fig.~\ref{fig:zones-oct-linear} and Fig.~\ref{fig:domains-quad} show the efficacy of our AGG search in discovering more precise invariants as the budget $\mathcal{R}$ increases. As discussed in Sec.~\ref{sec:agg}, a naive search strategy can be to randomly sample parameters from $\Theta$. We evaluate this using rejection sampling, where we aim to obtain $r$ feasible points for budget $\mathcal{R} = r$, ideally drawing samples for each point until a feasible one is found. However, since $\Theta$ is a high-dimensional polyhedron defined by complex linear constraints, most randomly drawn points are either infeasible or far from the high-scoring regions where precise outputs lie, and satisfying these constraints by chance could require arbitrarily many attempts. To match AGG's runtime, we cap this at 5 attempts per point, accepting the first feasible sample found and falling back to the interval relaxation ($\lambda = 0$) otherwise. This approach yields \textit{zero improvement} over $\mathcal{R} = 0$, confirming that unguided search fails regardless of budget. This highlights the importance of our adaptive gradient-guided search, which jointly manages feasibility and objective-awareness by adaptively switching between optimizing $\mathcal{J}$ when feasible and restoring feasibility otherwise, efficiently navigating $\Theta$ toward precise outputs even with small budgets.

\pheader{Precision Benefits from Sequence-Level Reasoning}
The results in Fig.~\ref{fig:domains-quad} use our standard sequence-level
analysis, where the instruction sequences supported by \toolname are analyzed
jointly. This joint handling avoids the precision loss incurred when composing
per-instruction transformers and is a key contributor to the gains we observe.
As shown in \hyperref[sec:appeval3]{Appendix ~D.3}, disabling sequence-level
handling leads to substantially fewer strengthened invariants.
\section{Related Works}

Manually designing efficient abstract transformers is both challenging and tedious and has motivated substantial work on automating their construction~\cite{autoabs, abstractionSyn1, abstractionSyn2, datalyzer, pasado, Linsyn, nai}. Synthesis-based approaches such as \cite{abstractionSyn1, abstractionSyn2} use program synthesis over a user-provided DSL to generate transformers for specific domains such as strings.
Data-driven techniques like \cite{pasado, Linsyn} train neural networks to serve as abstract transformers for specific non-linear operations, but these methods remain limited to bounded abstract elements.
Neither of these directions applies to expressive relational domains like Octagons~\cite{octagon} and Polyhedra~\cite{polyhedra}, where abstract elements may be unbounded and involve many numeric constraints, making properties like soundness difficult to enforce. This is evident in \cite{nai}, which attempts to learn transformers for such domains, but the trained models have no soundness guarantees, making them unsuitable for static analysis. Another line of work~\cite{awsSolver,blockTransformer,effsymbabs,posthat} uses symbolic abstraction~\cite{symbopt} to compute most-precise transformers for these domains, not only for individual instructions but also for entire sequences, thus enabling more precise analysis than per-instruction reasoning. However, these methods typically rely on expensive solver calls, which limits their scalability. Our evolving abstract transformer addresses these limitations, as it is general and not tied to any particular domain or instruction. The same transformer works for all supported domains and operators, while also enabling efficient sequence-level reasoning.

While symbolic optimization yields precise transformers and works such as~\cite{effsymbabs} improve its efficiency, it remains solver-dependent and not general across domains, causing libraries to default to faster but less precise transformers regardless of the analysis task, which is limiting. Prior work~\cite{tune1,tune2, symtuner} has demonstrated the benefits of tuning general analyzer parameters for specific downstream goals, and~\cite{bai,resourceaware} show the value of adapting analyses to time and memory budgets of the task. However, these approaches act only at the analyzer level and do not provide fine-grained control over the precision–efficiency trade-offs of individual transformers for varied scenarios. Our evolving abstract transformer addresses this by moving from computing a single fixed output to searching within a parametric space of sound outputs using an efficient gradient descent-based method that adapts to the task. For abstract interpretation, to the best of our knowledge, gradient-based methods have been used only in neural network verification settings~\cite{wang2021betacrown, xufast, ferrari2022complete, racoon}, where they are applied to hand-crafted parametric transformers for specific operators such as ReLU over bounded interval domains. These settings are simpler, and extending gradient-based methods to complex unbounded domains is substantially more difficult. Our formulation achieves this by transforming efficient yet unreliable first-order optimization methods such as gradient descent into a reliable mechanism with sound-by-construction guarantees, by using them within a parametric space of sound outputs to identify more precise outputs for instructions and instruction sequences.

\section{Conclusion}
We introduce the \toolname{} framework, built on our Evolving Abstract Transformer, addressing the fundamental limitation of existing transformers that have fixed imprecision, are non-adaptable, and must be manually designed per domain and operator. \toolname{} uses our UPOSE algorithm to generate a parametric space of sound outputs for a broad class of domains and operators, and our AGG algorithm to efficiently search this space for outputs best suited to the analysis objectives. This reformulation into an efficient search problem, combined with gradient-based optimization while preserving soundness, enables precise and adaptable analysis where the number of gradient steps can be adjusted to flexibly trade precision for efficiency. Across domains, \toolname{} achieves up to 3.2$\times$ faster convergence to the most precise invariants compared to existing baselines. This work opens up promising directions for accelerator-driven program analysis on modern hardware such as GPUs, as well as learning-based approaches to abstract interpretation.

\clearpage
\section*{ACKNOWLEDGEMENTS}

We thank the anonymous reviewers for their insightful comments, which greatly helped us improve the clarity and presentation of this work. This work was supported in part by a grant from the Amazon Illinois Center on AI for Interactive Conversational Experiences (AICE), NSF Grants No. CCF-2238079, CCF-2316233, CNS-2148583, and NAIRR240476, an Open Philanthropy research grant, and compute resources from Indiana Jetstream2.

\section*{DATA AVAILABILITY STATEMENT}

The source code and evaluation data used for the experiments in this paper are publicly available on Zenodo~\cite{absevolve_artifact} (DOI: \href{https://doi.org/10.5281/zenodo.19079444}{10.5281/zenodo.19079444}), with instructions to reproduce the results. The latest version of the framework is maintained on GitHub at \href{https://github.com/uiuc-focal-lab/AbsEvolve}{https://github.com/uiuc-focal-lab/AbsEvolve}.
\bibliographystyle{ACM-Reference-Format}
\bibliography{sections/bibliography}
\clearpage
\appendix
\section{Duality-based construction of Sound Parametric Scalar Map}
\label{app:derivation}
%
%

Our algorithm to generate a parametric family of sound transformers relies on finding a sound Parametric Scalar Map (PSM) (Def~\ref{def:parameter-scalar-map}) for the optimal value $o_{\min}$ of optimization problems of the form:
\begin{equation}
o_{\min} \;=\; 
\min_{\mathbf{v} \in \mathbb{R}^n}\; f(\mathbf{v})
\quad \text{s.t.}\quad A\,\mathbf{v} \le \mathbf{b}\,
\label{eq:omin}
\end{equation}
where the objective function $f(\mathbf{v})$ is of the form:
\begin{equation}
f(\mathbf{v}) = 
\sum_{1 \le i < k \le n} H_{ik}\, v_i v_k
+ \sum_{i=1}^{n} Q_i\, v_i^2
+ \mathbf{c}^{\top}\mathbf{v}
+ d
\label{eq:fv}
\end{equation}
Here, $H_{ik}\, v_i v_k$ denotes the bilinear terms (products of distinct variables), $Q_i\, v_i^2$ denotes the quadratic terms, $c_i\, v_i$ denotes the linear terms, and $d \in \mathbb{R}$ is a constant. 

\paragraph{Soundness criterion.}
A scalar $\ell$ is a \emph{sound lower bound} on the optimal value $o_{\min}$ whenever $\ell \le o_{\min}$.

We obtain a sound Parametric Scalar Map for Eq~\ref{eq:omin} through the following steps.

\subsection{Step 1: Efficient Dual Construction}
\label{subsec:eff-dual}

As our goal is to obtain certified \emph{lower bounds} on the primal optimum $o_{\min}$ for the problem in Eq~\ref{eq:omin}, we turn to its Lagrangian dual, because weak duality guarantees that any feasible value of the dual function is a lower bound. Introducing non-negative multipliers $\boldsymbol{\lambda}\!\ge 0$ for the constraints $A\mathbf{v}\le\mathbf{b}$ in the problem in Eq~\ref{eq:omin} yields the following re-formulation of the same problem:
\[
  o_{\min}\;=\;
  \min_{\mathbf{v}\in\mathbb{R}^n}\,
  \max_{\boldsymbol{\lambda}\ge 0}
  \bigl(f(\mathbf{v})+\boldsymbol{\lambda}^{\top}(A\mathbf{v}-\mathbf{b})\bigr)
\]

Swapping the order of optimization produces the dual problem
\[
  o_{\text{dual}}\;=\;
  \max_{\boldsymbol{\lambda}\ge 0}\,
  \underbrace{\min_{\mathbf{v}\in\mathbb{R}^n}
    \bigl(f(\mathbf{v})+\boldsymbol{\lambda}^{\top}(A\mathbf{v}-\mathbf{b})\bigr)}
    _{g(\boldsymbol{\lambda})}
\]
where $g(\boldsymbol{\lambda})$ is the \emph{dual function}.  By weak duality
\[
  g(\boldsymbol{\lambda}) \;\le\; o_{\min}\quad\text{for every } \boldsymbol{\lambda}\ge 0
\]
Consequently, if we can compute the inner minimization $g(\boldsymbol{\lambda})$ for a given $\boldsymbol{\lambda}\!\ge 0$, then inserting \emph{any} such multiplier values produces a sound lower bound, thus giving us the desired parametric family of bounds.

\paragraph{Inefficiency of the naive dual construction.}
Before evaluating the dual function $g(\boldsymbol{\lambda})$, note that
constructing the dual from the entire constraint set $A\mathbf{v}\le\mathbf{b}$
is unnecessarily costly.  The abstract input element—like most numerical
abstract domains—already contains tight \emph{box constraints}
$l_i\le v_i\le u_i$.  Including these bounds in the dual construction
introduces two Lagrange multipliers per variable, inflating the dual vector
and increasing the cost of every evaluation of $g$. A standard remedy is to \emph{peel off} the box constraints first: one rewrites the feasible region as \(\mathbf{v}\in[l,u]\) together with the remaining
coupled inequalities, and constructs the dual only for the latter.  The inner minimisation then becomes a box-constrained problem, which typically admits a closed-form solution or a very cheap projection step~\cite[Ch.~1]{bertsekas1999nonlinear};\, \cite[Ch.~17]{nocedal2006numerical}.  This strategy keeps the dual compact, reduces the cost of computing \(g(\boldsymbol{\lambda})\), and accelerates the subsequent maximisation over \(\boldsymbol{\lambda}\!\ge 0\).

So, we isolate tight lower and upper bounds for each variable from the constraint system \(A\mathbf{v}\le\mathbf{b}\).  This yields a bounding box \(\mathbf{v}\in[\mathbf{l},\mathbf{u}]\), where \(l_i\le v_i\le u_i\) for each \(i\).  (Some bounds may be unbounded, i.e.\ \(l_i=-\infty\) or \(u_i=+\infty\), depending on the original constraints.)  Removing these trivial bounds from the coupled inequalities leaves a reduced system
\(A'\mathbf{v}\le\mathbf{b}'\).  The optimization problem is therefore equivalent to
\begin{equation}
  o_{\min}\;=\;
  \min_{\mathbf{v}\in[\mathbf{l},\mathbf{u}]}\;
  f(\mathbf{v})
  \quad\text{s.t.}\quad
  A'\mathbf{v}\le\mathbf{b}'
  \label{eq:omin-bounded}
\end{equation}
where the box constraint is handled explicitly and only the coupled inequalities are retained in \(A'\mathbf{v}\le\mathbf{b}'\). Introducing non-negative multipliers $\boldsymbol{\lambda}\!\ge 0$ for the constraints $A'\mathbf{v}\le\mathbf{b'}$ in the problem in Eq~\ref{eq:omin-bounded} yields the following re-formulation:
\begin{equation}
  o_{\min}\;=\;
  \min_{\mathbf{v}\in[\mathbf{l},\mathbf{u}]}\;
  \max_{\boldsymbol{\lambda}\ge 0}
  \bigl(f(\mathbf{v})+\boldsymbol{\lambda}^{\top}(A'\mathbf{v}-\mathbf{b}')\bigr)
  \label{eq:omin-bounded-lag}
\end{equation}

Swapping the order of optimization then produces the dual problem
\begin{equation}
  o_{\text{dual}}\;=\;
  \max_{\boldsymbol{\lambda}\ge 0}\,
  \underbrace{\min_{\mathbf{v}\in[\mathbf{l},\mathbf{u}]}
    \bigl(f(\mathbf{v})+\boldsymbol{\lambda}^{\top}(A'\mathbf{v}-\mathbf{b}')\bigr)}
    _{g(\boldsymbol{\lambda})}
  \label{eq:omin-bounded-dual}
\end{equation}
where $g(\boldsymbol{\lambda})$ is the \emph{dual function}. Now, as discussed above, by weak duality, every value $g(\boldsymbol{\lambda})$ with $\boldsymbol{\lambda}\ge 0$ is a certified lower bound on $o_{\min}$. Henceforth, we focus on evaluating this dual function efficiently: once we can compute $g(\boldsymbol{\lambda})$ for any chosen multiplier, we immediately obtain an entire parametric family of sound bounds.

Substituting $f(\mathbf{v})$ from Eq.~\eqref{eq:fv}:
\[
g(\boldsymbol{\lambda}) \;=\;
\min_{\mathbf{v}\in[\mathbf{l},\mathbf{u}]}\;\left(
\sum_{1 \le i < k \le n} H_{ik}\, v_i v_k
\;+\; \sum_{i=1}^n Q_i\, v_i^2
\;+\; (\mathbf{c} + A'^T\lambda)^T \mathbf{v}
\;+\; d - \lambda^T \mathbf{b}'
\right)\
\]

Note, however, that this inner problem is still far from trivial: even with the simple box constraint, the presence of arbitrary bilinear terms $H_{ik}v_i v_k$ and potentially indefinite quadratic coefficients $Q_i$ turns the minimisation into a general (and possibly non-convex) quadratic programme, so an exact closed-form solution for $g(\boldsymbol{\lambda})$ is not available in general.

\subsection{Step 2: Sound inner minimisation via linear coefficient splitting}
\label{subsec:inner-min}

We now discuss how to make the inner minimisation problem \emph{tractable}.  The idea is to redistribute each linear coefficient between its associated quadratic and bilinear terms so that the objective can be broken into one– and two–variable “boxes” that can be optimised independently.
    
\medskip
\noindent
\textbf{Step 1: Introduce splitting parameters.}  
Pick auxiliary vectors $\mathbf{S}\in\mathbb{R}^{n}$ and
$\mathbf{D}\in\mathbb{R}^{n\times n}$ and rewrite
$g(\boldsymbol{\lambda})$ as
\[
\begin{aligned}
g(\boldsymbol{\lambda}) =  g(\boldsymbol{\lambda},\mathbf{S},\mathbf{D}) =
\min_{\mathbf{v}\in[\mathbf{l},\mathbf{u}]}\Bigl(
      &\sum_{1\le i<k\le n}
         \bigl(H_{ik}v_i v_k + D_{ik}^{i}v_i + D_{ik}^{k}v_k\bigr)\\
      &+\sum_{i=1}^{n}\bigl(Q_i v_i^{2} + S_i v_i\bigr)\\
      &+\bigl(\mathbf{c}+A^{\top}\lambda-\mathbf{S}-\boldsymbol{\Delta}\bigr)^{\!\top}\mathbf{v}
       +d-\lambda^{\top}\mathbf{b}\Bigr),
\end{aligned}
\]
where $\Delta_i:=\sum_{k\neq i}D_{ik}^{i}$ and
$\boldsymbol{\Delta}:=(\Delta_1,\dots,\Delta_n)^{\top}$.
For any fixed $(\mathbf{S},\mathbf{D})$ this is algebraically identical to the
original $g(\boldsymbol{\lambda})$; only the grouping of the linear terms has changed.

\medskip
\noindent
\textbf{Step 2: Minimise each box separately.}  
Splitting the objective term-wise yields the relaxed value
\begin{align}
\hat g(\boldsymbol{\lambda},\mathbf{S},\mathbf{D})=\;
&\sum_{1\le i<k\le n}
  \min_{(v_i,v_k)\in[l_i,u_i]\times[l_k,u_k]}
       \bigl(H_{ik}v_i v_k + D_{ik}^{i}v_i + D_{ik}^{k}v_k\bigr) \notag\\
&+\sum_{i=1}^{n}
  \min_{v_i\in[l_i,u_i]}
       \bigl(Q_i v_i^{2} + S_i v_i\bigr) \notag\\
&+\min_{\mathbf{v}\in[\mathbf{l},\mathbf{u}]}
       \bigl(\mathbf{c}+A^{\top}\lambda-\mathbf{S}-\boldsymbol{\Delta}\bigr)^{\!\top}\mathbf{v}
       +d-\lambda^{\top}\mathbf{b}.
\label{eq:relax}
\end{align}

\medskip
\noindent
\textbf{Step 3: Soundness of the relaxation.}  
For any family $\{h_j\}$ on a common domain $\mathcal{X}$,
\begin{equation}
\min_{x\in\mathcal{X}}\sum_{j}h_j(x)\;\ge\;\sum_{j}\min_{x\in\mathcal{X}}h_j(x),
\label{eq:min-split}
\end{equation}
so applying \eqref{eq:min-split} to \eqref{eq:relax} gives
\begin{equation}
  \hat g(\boldsymbol{\lambda},\mathbf{S},\mathbf{D})\;\le\; g(\boldsymbol{\lambda})\;\le\; o_{\min},
\qquad
\text{for all }\boldsymbol{\lambda}\ge 0,\;\mathbf{S},\;\mathbf{D}.
\label{eq:deriv-sound}
\end{equation}

\medskip
\noindent
\textbf{Intuition.}  
The split yields $n(n\!-\!1)/2$ two-variable boxes, $n$ single-variable boxes, and one residual linear box.  
Each box now contains only a single quadratic, bilinear, or linear term, so every sub-problem is straightforward to minimise.  
Summing their minima gives the relaxed value $\hat g(\boldsymbol{\lambda},\mathbf{S},\mathbf{D})$, which remains a sound lower bound as discussed above.

\subsection{Step 3: Efficient Evaluation of Bilinear, Quadratic, and Linear Terms}
\label{subsec:inner-parts-eval}

We have established that $\hat g(\boldsymbol{\lambda},\mathbf{S},\mathbf{D})$ is a sound lower bound for every $(\boldsymbol{\lambda},\mathbf{S},\mathbf{D})$ with $\boldsymbol{\lambda}\ge 0$, regardless of the particular choices of $\mathbf{S}$ and $\mathbf{D}$.  
Equivalently, we can view $(\boldsymbol{\lambda},\mathbf{S},\mathbf{D})$ as a parameter vector and $\Theta$ as the feasible parameter space
\[
\Theta \;=\;\{\;(\boldsymbol{\lambda},\mathbf{S},\mathbf{D})\in\mathbb{R}^m\times\mathbb{R}^n\times\mathbb{R}^{n\times n}\;\mid\;\boldsymbol{\lambda}\ge 0\;\}\,,
\]
so that for every parameter in $\Theta$ we obtain a finite, certified bound $\hat g(\boldsymbol{\lambda},\mathbf{S},\mathbf{D})\le o_{\min}$.

Next, we describe how to \emph{compute} $\hat g(\boldsymbol{\lambda},\mathbf{S},\mathbf{D})$ efficiently.  
For fixed $(\boldsymbol{\lambda},\mathbf{S},\mathbf{D})$, the relaxed objective decomposes into independent one- and two-dimensional subproblems.  
By solving each subproblem separately and summing their optimal values, we obtain the value of $\hat g(\boldsymbol{\lambda},\mathbf{S},\mathbf{D})$ and, at the same time, we update the feasible parameter space $\Theta$ to reflect any additional constraints discovered during this process. More concretely, the value of $\hat g(\boldsymbol{\lambda},\mathbf{S},\mathbf{D})$ is obtained by solving the following one- and two-dimensional subproblems for each class of terms:

\begin{enumerate}
    \item \textbf{Bilinear terms:} Consider a general bilinear subproblem (assume $A \neq 0$ as only such terms occur when we split the objective):
    \[
    \min_{(v_i,v_k)\in[l_i,u_i]\times[l_k,u_k]}\Big(A\, v_i v_k + B\, v_i + C\, v_k\Big)\,.
    \]
    
    If we look at $\hat g(\boldsymbol{\lambda},\mathbf{S},\mathbf{D})$ in Eq~\eqref{eq:relax}, we see that $B$ and $C$ depend on the parameters $\mathbf{D}$ introduced in our splitting (and $A$ is a fixed constant), so the conditions derived below give rise to corresponding constraints that we add to the parameter set $\Theta$:
    \[
    \Theta \;\leftarrow\; \Theta \;\cap\; \{\text{new sign or bound constraint}\}\,.
    \]
    
    First, we rewrite:
    \[
    A\, v_i v_k + B\, v_i + C\, v_k \;=\; \big(A\, v_i + C\big)\Big(v_k+\frac{B}{A}\Big)-\frac{B C}{A}\,,
    \]
    so that the product is between two affine terms with intervals:
    \[
    A\, v_i+C \in \begin{cases}
    [A\, l_i+C,\;A\, u_i+C]\,, & A\ge 0\,,\\[0.3em]
    [A\, u_i+C,\;A\, l_i+C]\,, & A < 0\,,
    \end{cases}
    \qquad
    v_k+\frac{B}{A}\in[l_k+B/A,\;u_k+B/A]\,.
    \]
    
    Setting
    \[
    l_1 := \min(A\, l_i+C,\;A\, u_i+C),\qquad u_1 := \max(A\, l_i+C,\;A\, u_i+C),
    \]
    \[
    l_2 := l_k+\frac{B}{A},\qquad u_2 := u_k+\frac{B}{A}\,,
    \]
    the value extreme products are $\min(\{\,l_1 l_2,\;l_1 u_2,\;u_1 l_2,\;u_1 u_2\,\}) - BC/A$.
    
    There are four configurations in which one of these corner-products may diverge to $-\infty$:
    \begin{enumerate}
    \item $l_1 = -\infty$ and $u_2 > 0$,
    \item $l_2 = -\infty$ and $u_1 > 0$,
    \item $u_1 = +\infty$ and $l_2 < 0$,
    \item $u_2 = +\infty$ and $l_1 < 0$.
    \end{enumerate}
    If in any of these cases the corresponding paired bound is also unbounded (e.g. $u_2 = +\infty$, $u_1 = +\infty$, $l_2 = -\infty$, or $l_1 = -\infty$, respectively), then the objective is unbounded and cannot yield a finite lower bound. In this situation we add the contradictory constraint
    \[
    \Theta \;\leftarrow\; \Theta \;\cap\; \{\,0 \le -1\,\}\,,
    \]
    rendering the parameter set infeasible so that the resulting subproblem value is $-\infty$.
    
    If, instead, the paired bound is finite, we add the appropriate sign constraint ($u_2 \le 0$, $u_1 \le 0$, $l_2 \ge 0$, or $l_1 \ge 0$) to $\Theta$:
    \[
    \Theta \;\leftarrow\; \Theta \;\cap\; \{\text{the corresponding bound constraint}\}\,,
    \]
    so that the problematic product cannot become negatively unbounded.
    
    Once all such constraints have been added to $\Theta$, the bilinear subproblem is well-posed. Its optimal value is then
    \[
    l_{\mathrm{bilin}} \;:=\; \min\{\,l_1 l_2,\;l_1 u_2,\;u_1 l_2,\;u_1 u_2\,\}\;-\;\frac{B C}{A}\,,
    \]
    which is finite if the constraints are satisfied. Otherwise, if the constraints cannot be satisfied, this subproblem value is $-\infty$.

    \item \textbf{Quadratic terms:} Consider a general quadratic subproblem (assume $A \neq 0$ as only such terms occur after splitting the objective):
    \[
    \min_{v_i\in[l_i,u_i]} \;\Big(A\, v_i^2 + B\, v_i\Big)\,,
    \]
    where $B$ depends on the splitting parameters~$\mathbf{S}$, so the conditions derived below give rise to corresponding constraints that we add to~$\Theta$:
    \[
    \Theta \;\leftarrow\; \Theta \;\cap\; \{\text{new constraint}\}\,.
    \]
    
    \begin{itemize}
    \item If $A>0$ (convex), we first compute the unconstrained minimizer
    \[
    v_i^* := -\frac{B}{2A}\,.
    \]
    $v_i^*$ is the global minimizer. If $v_i^*\in[l_i,u_i]$, the objective value is
    \[
    l_{\mathrm{quad}} := A\,(v_i^*)^2 + B\, v_i^*\,,
    \]
    which is finite.  If $v_i^*\notin[l_i,u_i]$, then the minimizer lies at one of the endpoints.  Moreover, the unconstrained minimizer cannot lie outside the box if both bounds were truly unbounded (i.e. $l_i=-\infty$ and $u_i=+\infty$), so at least one of the endpoints is finite. Thus, we can always evaluate the objective at the finite endpoint(s) and take the smaller value.
    \[
    l_{\mathrm{quad}} := \min\{\,A\, l_i^2 + B\, l_i,\;A\, u_i^2 + B\, u_i\,\}\,,
    \]
    which is finite.
    
    \item If $A<0$ (concave), then the objective can decrease without bound unless both $l_i$ and $u_i$ are finite.  
    More precisely:
    \begin{itemize}
    \item If $l_i = -\infty$, then as $v_i \to -\infty$ we have $A\, v_i^2 \to -\infty$.
    \item If $u_i = +\infty$, then as $v_i \to +\infty$ we have $A\, v_i^2 \to -\infty$.
    \end{itemize}
    In either of these cases the subproblem is unbounded below, so we add the contradictory constraint
    \[
    \Theta \;\leftarrow\; \Theta \;\cap\; \{\,0 \le -1\,\}\,,
    \]
    rendering the parameter set infeasible and the subproblem value $-\infty$.  
    If both $l_i$ and $u_i$ are finite, we evaluate at the two endpoints and take
    \[
    l_{\mathrm{quad}} := \min\{\,A\, l_i^2 + B\, l_i,\;A\, u_i^2 + B\, u_i\,\}\,,
    \]
    which is finite.
    
    \end{itemize}

    \item \textbf{Linear terms:} Consider the subproblem
    \[
    \min_{\mathbf{v} \in [\mathbf{l}, \mathbf{u}]} \; \mathbf{c}^\top \mathbf{v} + d
    \]
    We compute the value of this objective variable-wise. That is, for each term~$c_i \cdot v_i$ where $v_i \in [l_i, u_i]$, we compute its minimum contribution~$L_i$ and determine a corresponding set of constraints~$\Theta_i$ on the parameters (i.e., on the coefficients~$c_i$) that ensure this minimum is finite. We consider four exhaustive cases:
    \begin{enumerate}[leftmargin=*]
        \item \textit{Both bounds finite:} The minimum depends on the sign of~$c_i$: if $c_i \geq 0$, the minimum is $c_i \cdot l_i$; if $c_i < 0$, it is $c_i \cdot u_i$. This can be expressed uniformly as:
        \[
        L_i = -\frac{|c_i|}{2}(u_i - l_i) + \frac{c_i}{2}(u_i + l_i)
        \]
        No constraint is needed on~$c_i$ in this case.
    
        \item \textit{Lower bound is $-\infty$, upper bound is finite:} If $c_i > 0$, the minimum is unbounded ($-\infty$). To ensure a finite value, we require $c_i \leq 0$, in which case $L_i = c_i \cdot u_i$.
    
        \item \textit{Lower bound is finite, upper bound is $+\infty$:} If $c_i < 0$, the minimum is unbounded. To ensure a finite value, we require $c_i \geq 0$, in which case $L_i = c_i \cdot l_i$.
    
        \item \textit{Both bounds infinite:} In this case, the minimum is finite only if $c_i = 0$, so we add this constraint and take $L_i = 0$.
    \end{enumerate}

We apply this procedure to all coordinates $i = 1, \dots, n$, summing the individual contributions to compute the overall minimized value:
\[
L = \sum_{i=1}^{n} L_i + d
\]
The complete constraint set~$\Theta$ is formed by collecting the individual~$\Theta_i$ across all variables. If the parameters $(\boldsymbol{\lambda}, \mathbf{S}, \mathbf{D})$ satisfy~$\Theta$, then the dual value is well-defined and finite. Otherwise, the objective is unbounded and evaluates to~$-\infty$.

\end{enumerate}

\noindent\textbf{Conclusion.}\enspace
In Step~\ref{subsec:inner-min}, we constructed the relaxed objective $\hat g(\boldsymbol{\lambda},\mathbf{S},\mathbf{D})$ by decomposing the inner minimization into independent subproblems and applying the min-split inequality. This yielded a symbolic under-approximation of~$o_{\min}$. In Step~\ref{subsec:inner-parts-eval}, we showed how to efficiently evaluate $\hat g(\boldsymbol{\lambda},\mathbf{S},\mathbf{D})$ by solving each bilinear, quadratic, and linear subproblem separately and aggregating the resulting sign and bound constraints into a parameter set~$\Theta$.

Together, these steps define a \emph{Parametric Scalar Map} $(\Theta, L)$ for the optimization problem~\eqref{eq:omin}, where
\[
L:\Theta \to \mathbb{R}, \qquad L(\boldsymbol{\lambda}, \mathbf{S}, \mathbf{D}) := \hat g(\boldsymbol{\lambda}, \mathbf{S}, \mathbf{D})\,.
\]
By construction, each $(\boldsymbol{\lambda}, \mathbf{S}, \mathbf{D}) \in \Theta$ yields a finite, sound lower bound on~$o_{\min}$—that is,
\[
L(\boldsymbol{\lambda}, \mathbf{S}, \mathbf{D}) \le o_{\min}
\quad \text{and} \quad
L(\boldsymbol{\lambda}, \mathbf{S}, \mathbf{D}) \in \mathbb{R}\,.
\]
If $\Theta$ is empty or infeasible, then $(\Theta, L)$ yields no informative bound beyond the trivial $-\infty$. This may occur if the true optimum is unbounded or if the relaxation is too coarse to derive a finite, sound lower bound. Thus, Steps~\ref{subsec:eff-dual}--\ref{subsec:inner-parts-eval} compute a sound PSM for the original optimization problem.

\pheader{Special case: linear objectives}
Our construction also applies when the objective function~$f(\mathbf{v})$ is purely linear, i.e., when $H = 0$ and $Q = 0$. In this setting, the primal optimization problem becomes a standard linear program:
\[
f(\mathbf{v}) = \mathbf{c}^\top \mathbf{v} + d, \qquad o_{\min} = \min_{\mathbf{v} \in [\mathbf{l}, \mathbf{u}]} \;\mathbf{c}^\top \mathbf{v} + d \quad \text{s.t. } A'\mathbf{v} \le \mathbf{b}'.
\]

Here, Step~2 (linear coefficient splitting) is no longer needed, as there are no bilinear or quadratic terms to redistribute. The Parametric Scalar Map simplifies to the pair~$(\Theta, L)$, where:
\[
\Theta = \{\; \boldsymbol{\lambda} \in \mathbb{R}_{\ge 0}^m \;\mid\; \text{linear subproblem yields finite bound} \;\}
\quad\text{and}\quad
L(\boldsymbol{\lambda}) = \hat g(\boldsymbol{\lambda})\,.
\]

However, \emph{feasibility constraints on~$\boldsymbol{\lambda}$ still arise} from the symbolic analysis of the linear evaluation step (see Section~\ref{subsec:inner-parts-eval}). Specifically, to ensure that the inner minimum over $[\mathbf{l}, \mathbf{u}]$ is finite, the affine objective coefficients $\mathbf{c} + A'^{\top} \boldsymbol{\lambda}$ must obey sign conditions based on unbounded directions of the box domain. For instance, if $l_i = -\infty$ for some coordinate $i$, then to avoid $-\infty$ in the objective, we must ensure that the $i$-th coefficient of $\mathbf{c} + A'^{\top} \boldsymbol{\lambda}$ is non-negative; similar conditions apply for $u_i = +\infty$. These constraints prune the parameter space~$\Theta$ accordingly. Thus, even in the linear case, our construction produces a valid Parametric Scalar Map that yields sound bounds~$L(\boldsymbol{\lambda}) \le o_{\min}$, with $\Theta$ encoding the feasibility region induced by symbolic soundness conditions.

\section{Algorithms}

\begin{algorithm}[h]
\caption{Compute Effective Update Map (EUM)}
\label{algo:eum}
\begin{algorithmic}[1]
\State \textbf{Input:} Instruction block $\mathcal{B} = [I_1, I_2, \dots, I_k]$; set of input variables $V$
\State \textbf{Output:} Map $\sigma$ from output variables to expressions over $V$
\Procedure{ComputeEUM}{$\mathcal{B}$, $V$}    
    \State Initialize $\sigma \gets \{\}$
    \For{each instruction $x := e$ in $B$}
    \State $\hat{e} \gets e[v \mapsto \sigma(v) \mid v \in \text{vars}(e) \cap \sigma]$ \Comment{with substitution and algebraic simplification}
    \State $\sigma(x) \gets \hat{e}$
    \EndFor
    \State \Return $\sigma$
\EndProcedure
\end{algorithmic}
\end{algorithm}

\begin{algorithm}[h]
\caption{Merging Assignment Sequences into QGO Blocks}
\label{algo:merge-blocks}
\begin{algorithmic}[1]
\State \textbf{Input:} CFG $\mathcal{F}$ with basic blocks $\{N_1, \ldots, N_k\}$
\State \textbf{Output:} Updated CFG $\widehat{\mathcal{F}}$ with merged instruction blocks; set of merged blocks $\mathbb{B}$
\Procedure{MergeAssignmentSequences}{$\mathcal{C}, \mathcal{V}$}
    \State Initialize $\mathbb{B} \gets \{\}$
    \For{each basic block $N$ in $\mathcal{C}$}
        \State Initialize $\mathcal{B}_N \gets [\,]$, $\texttt{curr} \gets [\,]$
        \For{each instruction $I$ in $N$}
            \If{$I$ is affine or quadratic assignment}
                \State $\texttt{tmp} \gets \texttt{curr} \cup [I]$
                \State $\sigma \gets$ \Call{ComputeEUM}{$\texttt{tmp}, \mathcal{V}$}
                \If{\Call{IsQuadraticBounded}{$\sigma$}}
                    \State $\texttt{curr} \gets \texttt{tmp}$
                \Else
                    \State Append $\texttt{curr}$ to $\mathcal{B}_N$; $\texttt{curr} \gets [I]$
                     \Comment{Degree exceeds 2 -> Instruction in new block}
                \EndIf
            \Else
                \If{$\texttt{curr} \neq [\,]$}
                    \State Append $\texttt{curr}$ to $\mathcal{B}_N$; $\texttt{curr} \gets [\,]$
                \EndIf
            \Comment{Skip non-assignment instruction $I$}
            \EndIf
        \EndFor
        \If{$\texttt{curr} \neq [\,]$}
            \State Append $\texttt{curr}$ to $\mathcal{B}_N$
        \EndIf
        \State Replace instructions of $N$ with $\mathcal{B}_N$; $\mathbb{B} \gets \mathbb{B} \cup \mathcal{B}_N$
    \EndFor
    \State \Return $(\widehat{\mathcal{C}}, \mathbb{B})$
\EndProcedure
\end{algorithmic}
\end{algorithm}

\medskip
\noindent
\textbf{Concrete Semantics of a Block.}\;
Let~$\seq = [\mathit{ins}_1; \dots; \mathit{ins}_n]$ be an admissible sequence of assignments, and let~$\block(\seq)$ be the corresponding block with Effective Update Map~$\sigma_{\block} : \mathcal{V} \to \mathbb{P}(\mathcal{V})$. Let~$\mathcal{A} = \{x_1, \dots, x_m\}$ be the set of variables updated in the block. Then, for any input state~$s$, the concrete big-step semantics of the block is defined as:
\[
\llbracket \block(\seq) \rrbracket(s) \triangleq 
s\big[x_1 \leftarrow \sigma_{\block}(x_1)(s),\;
      \dots,\;
      x_m \leftarrow \sigma_{\block}(x_m)(s)\big],
\]
where $\sigma_{\block}(x_i)(s)$ denotes the evaluation of the polynomial~$\sigma_{\block}(x_i)$ in the input state~$s$.


\section{Theorems and Proofs}

\begin{theorem}[Soundness of PSM Computing Procedure]
The procedure outlined in Appendix~\ref{app:derivation} computes a Parametric Scalar Map $\mathcal{M} = (\Theta, L)$ for optimization problems of the form:
\[
o_{\min} \;=\; 
\min_{\mathbf{v} \in \mathbb{R}^n}\; f(\mathbf{v})
\quad \text{s.t.}\quad A\,\mathbf{v} \le \mathbf{b},
\]
where the objective function $f(\mathbf{v})$ is a quadratic polynomial given by:
\[
f(\mathbf{v}) = 
\sum_{1 \le i < k \le n} H_{ik}\, v_i v_k
+ \sum_{i=1}^{n} Q_i\, v_i^2
+ \mathbf{c}^{\top}\mathbf{v}
+ d.
\]
The procedure is sound, meaning that for all $\theta \in \Theta$, the bound $L(\theta)$ satisfies $L(\theta) \le o_{\min}$.
\end{theorem}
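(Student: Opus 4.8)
The plan is to establish, for every parameter tuple $\theta = (\boldsymbol{\lambda}, \mathbf{S}, \mathbf{D}) \in \Theta$, the chain of relations
\[
L(\theta) \;=\; \hat g(\boldsymbol{\lambda}, \mathbf{S}, \mathbf{D}) \;\le\; g(\boldsymbol{\lambda}) \;\le\; o_{\min},
\]
which immediately gives the desired $L(\theta) \le o_{\min}$. The argument factors cleanly along the three construction steps of Appendix~\ref{app:derivation}, so I would prove each link of the chain in turn and then assemble them. The bulk of the effort lies in the last step: showing that $L(\theta)$ equals $\hat g$ and is \emph{finite} precisely on $\Theta$.

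First I would justify the rightmost inequality via weak duality. The key observation is that extracting the box $\mathbf{v}\in[\mathbf{l},\mathbf{u}]$ from $A\mathbf{v}\le\mathbf{b}$ does not change the feasible region, so the reformulation in Eq.~\eqref{eq:omin-bounded} is exact and retains the same optimum $o_{\min}$. Introducing non-negative multipliers $\boldsymbol{\lambda}\ge 0$ for only the residual coupled constraints and swapping the order of $\min$ and $\max$ (Eq.~\eqref{eq:omin-bounded-dual}) defines the dual function $g$; weak duality~\cite{boyd2004convex} then guarantees $g(\boldsymbol{\lambda})\le o_{\min}$ for every $\boldsymbol{\lambda}\ge 0$, crucially \emph{without} any convexity assumption on the (possibly indefinite) quadratic $f$.

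Next I would establish $\hat g(\boldsymbol{\lambda},\mathbf{S},\mathbf{D}) \le g(\boldsymbol{\lambda})$. For any fixed $(\mathbf{S},\mathbf{D})$ the coefficient-splitting rewrite is an algebraic identity that merely regroups the linear terms, so the inner objective is unchanged and its box-minimum still equals $g(\boldsymbol{\lambda})$. Decomposing that single box minimization into a sum of independent one- and two-variable box minimizations and applying the min-split inequality~\eqref{eq:min-split} yields $\hat g(\boldsymbol{\lambda},\mathbf{S},\mathbf{D})\le g(\boldsymbol{\lambda})$ uniformly over all $\boldsymbol{\lambda}\ge 0$, $\mathbf{S}$, $\mathbf{D}$. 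Combined with the previous paragraph this reproduces Eq.~\eqref{eq:deriv-sound}.

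Finally, and this is where I expect the main obstacle, I would verify that the evaluation routine of Appendix~\ref{subsec:inner-parts-eval} computes $\hat g$ exactly and that $\Theta$ characterizes precisely those parameters for which \emph{every} subproblem is bounded below. I would argue term-class by term-class (bilinear, quadratic, linear) that the stated closed forms equal the true box minimum whenever the accompanying sign and bound constraints hold, and that each subproblem diverges to $-\infty$ otherwise. The delicate case is the bilinear terms over unbounded boxes: here I must confirm that the four enumerated divergence configurations are exhaustive, that the paired-bound test correctly separates genuine unboundedness (which forces the contradiction $0\le -1$ into $\Theta$) from a repairable situation (which adds a single sign constraint), and that on the feasible region the corner-product formula $l_{\mathrm{bilin}}$ actually \emph{attains} the minimum rather than only bounding it. The quadratic case requires similar care at the convex/concave boundary and when only one box endpoint is finite. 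Since on $\Theta$ each subproblem attains its finite optimum, the summed value $L(\theta)=\hat g(\boldsymbol{\lambda},\mathbf{S},\mathbf{D})$ is finite and satisfies the full chain, which completes the proof.
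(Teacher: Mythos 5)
Your proposal is correct and follows essentially the same route as the paper's own proof: the identical chain $L(\theta) = \hat g(\boldsymbol{\lambda}, \mathbf{S}, \mathbf{D}) \le g(\boldsymbol{\lambda}) \le o_{\min}$, with the right inequality from weak duality applied to the box-separated dual and the left inequality from the min-split relaxation of the coefficient-splitting identity. The one place you go beyond the paper is your final step, where you propose to verify that the closed-form evaluations of the bilinear, quadratic, and linear subproblems are exact and that the divergence cases are exhaustive; the paper instead explicitly assumes the correctness of these algebraic decompositions, so your plan adds rigor to, rather than departs from, its argument.
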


\begin{proof}
As described in Appendix~\ref{app:derivation}, the procedure constructs the PSM $\mathcal{M} = (\Theta, L)$ using the following parameters:
\begin{itemize}
    \item Dual multipliers $\boldsymbol{\lambda} \in \mathbb{R}^m$ for the constraints $A\,\mathbf{v} \le \mathbf{b}$,
    \item Splitting parameters $\mathbf{S} \in \mathbb{R}^n$ and $\mathbf{D} \in \mathbb{R}^{n \times n}$ used to redistribute linear coefficients during the decomposition of the inner minimization.
\end{itemize}

The parameter space $\Theta$ consists of all tuples $(\boldsymbol{\lambda}, \mathbf{S}, \mathbf{D})$ that satisfy the feasibility constraints collected during the minimization of the decomposed objective components in Section~\ref{subsec:inner-parts-eval}, together with the constraint $\boldsymbol{\lambda} \ge 0$. The map $L$ is defined as:
\[
L(\boldsymbol{\lambda}, \mathbf{S}, \mathbf{D}) := \hat{g}(\boldsymbol{\lambda}, \mathbf{S}, \mathbf{D}),
\]
where $\hat{g}$ denotes the relaxed objective obtained by decomposing the dual objective $g(\boldsymbol{\lambda})$ into independent subproblems using the splitting parameters $\mathbf{S}$ and $\mathbf{D}$.

By the weak duality theorem, we know that:
\[
g(\boldsymbol{\lambda}) \;\le\; o_{\min} \qquad \text{for all } \boldsymbol{\lambda} \ge 0.
\]
Furthermore, by construction of the relaxation (Appendix~\ref{subsec:inner-min}), the decomposed bound satisfies:
\[
\hat{g}(\boldsymbol{\lambda}, \mathbf{S}, \mathbf{D}) \;\le\; g(\boldsymbol{\lambda}) \qquad \text{for all } \mathbf{S}, \mathbf{D}.
\]

Combining the two, we obtain:
\[
L(\boldsymbol{\lambda}, \mathbf{S}, \mathbf{D}) = \hat{g}(\boldsymbol{\lambda}, \mathbf{S}, \mathbf{D}) \;\le\; g(\boldsymbol{\lambda}) \;\le\; o_{\min},
\]
for all $\boldsymbol{\lambda} \ge 0$ and for all choices of $\mathbf{S}, \mathbf{D}$.

Since $\Theta$ is defined to include only those tuples $(\boldsymbol{\lambda}, \mathbf{S}, \mathbf{D})$ satisfying $\boldsymbol{\lambda} \ge 0$ along with the additional feasibility constraints, it follows that every $(\boldsymbol{\lambda}, \mathbf{S}, \mathbf{D}) \in \Theta$ satisfies the above inequality. Thus, $L$ yields a finite, sound lower bound on $o_{\min}$ for each valid parameter, establishing that $\mathcal{M} = (\Theta, L)$ is a sound Parametric Scalar Map for the given optimization problem. Note that the correctness of the proof relies on the correct computation of the relaxed bound $\hat{g}(\boldsymbol{\lambda}, \mathbf{S}, \mathbf{D})$, which relies on the correctness of the algebraic decompositions used to evaluate the independent bilinear, quadratic, and linear subproblems (Appendix~\ref{subsec:inner-parts-eval}). These decompositions involve standard algebraic manipulations and closed-form expressions, whose correctness we assume.

\label{thm:soundness-fspb}
\end{proof}

\begin{theorem}[Tightness of PSM Procedure for Linear Programs]
\label{thm:lin-tightness}
For the linear program
\[
\text{\emph{(P)}}\qquad
o_{\min}\;=\;
\min_{\mathbf{v}\in\mathbb{R}^n}\;\mathbf{c}^{\top}\mathbf{v}+d
\quad\text{s.t.}\quad A\mathbf{v}\le\mathbf{b},
\]
let \(\mathcal{M}=(\Theta,L)\) be the Parametric Scalar Map produced by the
procedure of Appendix~\ref{app:derivation}.  
Then the optimal bound \(o_{\min}\) is realised by \(\mathcal{M}\); that is,
\[
\exists\,\theta\in\Theta:\;L(\theta)=o_{\min}.
\]
\end{theorem}
\begin{proof}
Assume the primal LP (P) is feasible and has a finite optimum \(o_{\min}\);
otherwise the claim is vacuous.

\textit{Dual form.}  
After separating variable bounds as in Appendix~\ref{app:derivation},
the dual objective is
\[
g(\boldsymbol{\lambda})=
\min_{\mathbf{v}\in[\mathbf{l},\mathbf{u}]}
  (\mathbf{c}+A'^{\top}\boldsymbol{\lambda})^{\!\top}\mathbf{v}
  -\boldsymbol{\lambda}^{\top}\mathbf{b}'+d,
\qquad \boldsymbol{\lambda}\ge 0,
\]
and the PSM construction keeps precisely those multipliers giving finite values:
\[
\Theta=\{\boldsymbol{\lambda}\ge 0\mid g(\boldsymbol{\lambda})>-\infty\},
\quad L(\boldsymbol{\lambda})=g(\boldsymbol{\lambda}).
\]

\textit{Strong duality.}  
Feasibility and boundedness of (P) imply strong duality:
\[
o_{\min}=\max_{\boldsymbol{\lambda}\ge 0}\,g(\boldsymbol{\lambda}).
\]
Hence there exists a dual optimal \(\boldsymbol{\lambda}^*\ge 0\) with
\(g(\boldsymbol{\lambda}^*)=o_{\min}\).

\textit{Membership in \(\Theta\).}  
Since \(g(\boldsymbol{\lambda}^*)=o_{\min}>-\infty\), the multiplier
\(\boldsymbol{\lambda}^*\) satisfies the finiteness condition defining
\(\Theta\), so \(\boldsymbol{\lambda}^*\in\Theta\).

\textit{Conclusion.}  
Setting \(\theta=\boldsymbol{\lambda}^*\) gives
\(L(\theta)=g(\boldsymbol{\lambda}^*)=o_{\min}\),
so the optimal bound is realised by \(\mathcal{M}\).

\medskip
\noindent\textbf{Remark.}
This tightness result hinges on strong duality, which holds for linear
programs but need not hold for general quadratic objectives, especially
non-convex ones where a positive duality gap may exist. In the linear
case, strong duality guarantees that the dual optimum exactly matches
the primal optimum \(o_{\min}\), so \(\mathcal{M}\) always achieves the
tightest possible sound bound. For general quadratic objectives, a
positive duality gap may cause the dual optimum to fall strictly below
\(o_{\min}\), and even this tightness guarantee fails: \(\mathcal{M}\)
may not be able to produce \(o_{\min}\) as a bound at all. Note also
that \(\mathcal{M}\) is currently not complete, in the sense that not
every finite \(\ell \le o_{\min}\) need be realizable: the box-peeled
construction trades completeness over all lower bounds for tractability
of optimization over \(\Theta\). Completeness could in principle be
recovered by dualising the box constraints as well, at the cost of
introducing additional multipliers and making optimization over
\(\Theta\) significantly harder.
\end{proof}

\begin{theorem}[Interval Relaxation Behavior]
\label{thm:interval-start}
The symbolic procedure described in Appendix~\ref{app:derivation} constructs a Parametric Scalar Map $\mathcal{M} = (\Theta, L)$ for optimization problems of the form:
\[
o_{\min} \;=\; 
\min_{\mathbf{v} \in \mathbb{R}^n}\; f(\mathbf{v})
\quad \text{s.t.}\quad A\,\mathbf{v} \le \mathbf{b},
\]
where the objective~$f(\mathbf{v})$ is a quadratic polynomial given by:
\[
f(\mathbf{v}) = 
\sum_{1 \le i < k \le n} H_{ik}\, v_i v_k
+ \sum_{i=1}^{n} Q_i\, v_i^2
+ \mathbf{c}^{\top}\mathbf{v}
+ d.
\]
When the parameters are set to zero, i.e., $\theta = (\boldsymbol{\lambda}, \mathbf{S}, \mathbf{D}) = (\vec{0}, \vec{0}, \vec{0})$, the behavior of the map~$\mathcal{M}$ coincides with that of interval relaxation. If interval relaxation yields a finite lower bound, then $\vec{0} \in \Theta$ and $L(\vec{0})$ matches the interval relaxation value. Otherwise, if the interval relaxation is unbounded below, then $\vec{0} \notin \Theta$.
\end{theorem}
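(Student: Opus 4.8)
The plan is to substitute $\theta = (\boldsymbol{\lambda}, \mathbf{S}, \mathbf{D}) = (\vec{0}, \vec{0}, \vec{0})$ directly into the relaxed dual objective $\hat g$ of Eq.~\eqref{eq:relax} and show that it collapses to exactly the interval-arithmetic lower bound of $f$ over the box $[\mathbf{l}, \mathbf{u}]$, where by \emph{interval relaxation} we mean bounding the monomial-decomposed objective term-by-term via interval arithmetic over the box bounds (relational constraints dropped). First I would perform the algebraic reduction: at $\vec{0}$ we have $\boldsymbol{\Delta} = \vec{0}$ (since every $D^i_{ik} = 0$), the dual penalty term $\boldsymbol{\lambda}^\top(A'\mathbf{v} - \mathbf{b}')$ vanishes, and all split coefficients $S_i, D^i_{ik}, D^k_{ik}$ are zero. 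Consequently each quadratic box-subproblem reduces to $\min_{v_i \in [l_i, u_i]} Q_i v_i^2$, each bilinear box-subproblem reduces to $\min_{(v_i,v_k)} H_{ik} v_i v_k$, the residual linear subproblem retains the full coefficient vector $\mathbf{c}$, and the trailing constant is simply $d$. Thus $\hat g(\vec 0)$ is a sum of independent per-monomial minimizations over the box.

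The core step is then to verify that minimizing each monomial independently over the box yields precisely the lower endpoint that interval arithmetic assigns to that monomial. For a linear term $c_i v_i$ over $[l_i, u_i]$ the minimum is $\min(c_i l_i, c_i u_i)$, the lower end of $c_i \cdot [l_i, u_i]$. For a quadratic monomial $Q_i v_i^2$ I would split on the sign of $Q_i$: when $Q_i > 0$ the box-minimum is $Q_i$ times the smallest attainable value of $v_i^2$, and when $Q_i < 0$ it is $Q_i$ times the largest; both coincide with the lower endpoint of $Q_i \cdot [v_i^2]$ obtained by squaring the interval first and then scaling. For a bilinear monomial $H_{ik} v_i v_k$, since the product is multiaffine its extrema over a box occur at the corners, so the box-minimum equals the smallest scaled corner product, which is exactly the lower endpoint of $H_{ik} \cdot ([l_i,u_i] \cdot [l_k,u_k])$. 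Summing over all monomials and adding $d$ shows $\hat g(\vec 0)$ equals the interval-relaxation value, establishing the value claim whenever everything is finite.

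For the boundedness dichotomy, I would track the feasibility constraints that the evaluation procedure of Appendix~\ref{subsec:inner-parts-eval} appends to $\Theta$ and check each at $\vec 0$. In the linear case the effective coefficient is $(\mathbf{c} + A'^\top \boldsymbol{\lambda})_i$, which at $\boldsymbol{\lambda} = \vec 0$ is just $c_i$; the sign constraints enforcing finiteness (e.g.\ $c_i \le 0$ when $l_i = -\infty$) hold at $\vec 0$ iff the corresponding interval product is finite, and are violated at $\vec 0$ exactly when interval arithmetic returns $-\infty$ on that coordinate. Likewise, the concave quadratic case ($Q_i < 0$ with an unbounded side) and each divergent bilinear configuration append the contradictory constraint $\{0 \le -1\}$ precisely when the monomial is unbounded below over the box, i.e.\ precisely when interval arithmetic yields $-\infty$ for it. Since interval relaxation is finite iff every monomial contributes a finite lower endpoint, this gives $\vec 0 \in \Theta$ iff interval relaxation is finite with $L(\vec 0)$ equal to its value, and $\vec 0 \notin \Theta$ otherwise, completing both cases.

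The main obstacle I anticipate is the bilinear term: one must carefully match the procedure's four divergence configurations against interval multiplication of $[l_i,u_i]$ and $[l_k,u_k]$. At $\vec 0$ the linear coefficients $B$ and $C$ introduced during bilinear evaluation both vanish, so the factored form $(H_{ik}v_i + C)(v_k + B/H_{ik}) - BC/H_{ik}$ degenerates to the raw product $H_{ik} v_i v_k$; one then has to handle all sign and infinity combinations of the endpoints so that the appended sign or contradiction constraints line up exactly with the finiteness of the interval product. The linear and convex-quadratic cases are routine endpoint evaluations; the delicate bookkeeping lies in confirming that the constraints added to $\Theta$ at $\vec 0$ characterize finiteness term-by-term and that a single $-\infty$ contribution drives both $\hat g(\vec 0)$ and the interval relaxation to $-\infty$ simultaneously.
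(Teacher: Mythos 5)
Your proposal is correct and follows essentially the same route as the paper's proof: substitute $\theta = (\vec{0},\vec{0},\vec{0})$ into $\hat g$, observe that the dual penalty and all split terms vanish so the bound collapses to term-wise minimization of $H_{ik}v_iv_k$, $Q_iv_i^2$, and $\mathbf{c}^\top\mathbf{v}$ over the box (i.e., interval relaxation), and then invoke the feasibility constraints from Appendix~\ref{subsec:inner-parts-eval} to show $\vec{0}\in\Theta$ exactly when that relaxation is finite. Your treatment is in fact more detailed than the paper's, which asserts the term-wise collapse and the membership dichotomy without explicitly verifying each monomial class or the degenerate bilinear factorization at $B=C=0$ as you do.
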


\begin{proof}
As described in Appendix~\ref{app:derivation}, the Parametric Scalar Map $\mathcal{M} = (\Theta, L)$ is constructed using:
\begin{itemize}
    \item Dual multipliers $\boldsymbol{\lambda} \in \mathbb{R}^m$ for the constraints $A\,\mathbf{v} \le \mathbf{b}$,
    \item Splitting parameters $\mathbf{S} \in \mathbb{R}^n$ and $\mathbf{D} \in \mathbb{R}^{n \times n}$ for distributing linear terms.
\end{itemize}

The lower bound is computed by the relaxed expression:
\begin{align*}
\hat g(\boldsymbol{\lambda},\mathbf{S},\mathbf{D}) =\;
&\sum_{1 \le i < k \le n}
  \min_{(v_i,v_k)\in[l_i,u_i]\times[l_k,u_k]}
       \left( H_{ik}v_i v_k + D_{ik}^{i}v_i + D_{ik}^{k}v_k \right) \\
&+\sum_{i=1}^{n}
  \min_{v_i\in[l_i,u_i]}
       \left( Q_i v_i^{2} + S_i v_i \right) \\
&+\min_{\mathbf{v}\in[\mathbf{l},\mathbf{u}]}
       \left( \left(\mathbf{c}+A^{\top}\boldsymbol{\lambda}-\mathbf{S}-\boldsymbol{\Delta}\right)^{\!\top}\mathbf{v} \right)
       + d - \boldsymbol{\lambda}^{\top}\mathbf{b}
\end{align*}

When $(\boldsymbol{\lambda}, \mathbf{S}, \mathbf{D}) = (\vec{0}, \vec{0}, \vec{0})$, this simplifies to:
\begin{align*}
\hat g(\vec{0}, \vec{0}, \vec{0}) =\;
&\sum_{1 \le i < k \le n}
  \min_{(v_i,v_k)\in[l_i,u_i]\times[l_k,u_k]}
        H_{ik}v_i v_k \\
&+\sum_{i=1}^{n}
  \min_{v_i\in[l_i,u_i]}
       Q_i v_i^{2} \\
&+\min_{\mathbf{v}\in[\mathbf{l},\mathbf{u}]}
       \mathbf{c}^{\top}\mathbf{v}
       + d
\end{align*}

This is precisely the interval relaxation of~$f(\mathbf{v})$ over the input box~$[l_i, u_i]$, ignoring the linear constraints~$A\,\mathbf{v} \le \mathbf{b}$. During evaluation, Section~\ref{subsec:inner-parts-eval} adds symbolic feasibility constraints to ensure that any parameter tuple leading to a bound of $-\infty$ is excluded from the parameter space~$\Theta$. In particular, if the interval relaxation yields~$-\infty$, then the zero-point configuration $(\boldsymbol{\lambda}, \mathbf{S}, \mathbf{D}) = (\vec{0}, \vec{0}, \vec{0})$ is ruled out by these constraints, and $\vec{0} \notin \Theta$. Otherwise, if the interval relaxation yields a finite value, then $\vec{0} \in \Theta$, and the corresponding bound~$L(\vec{0})$ exactly matches the interval relaxation value. Thus, the Parametric Scalar Map recovers interval relaxation as the zero-parameter instantiation, when valid.

\end{proof}

\begin{theorem}[Polyhedrality of the feasible‐parameter set]
\label{thm:theta-polyhedral}
Let 
\(
\theta=(\boldsymbol{\lambda},\mathbf S,\mathbf D)\in
\mathbb{R}^{m+n+n^{2}}
\)
be the parameter vector constructed by the PSM computing procedure in
Appendix~\ref{app:derivation}.  
The set of admissible parameters
\[
\Theta \;=\;
\bigl\{\theta\mid\text{all feasibility checks introduced
in Steps\,2–4 are satisfied}\bigr\}
\]
is a (possibly empty) \emph{polyhedron}; equivalently, there exist a matrix
\(M\) and vector \(\mathbf h\) such that
\(
\Theta=\{\theta\mid M\theta\le\mathbf h\}.
\)
\end{theorem}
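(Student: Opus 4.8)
The plan is to exhibit $\Theta$ as a finite intersection of closed half-spaces and hyperplanes by tracking, across every subproblem solved in Appendix~\ref{subsec:inner-parts-eval}, exactly which conditions are appended to the parameter set. First I would fix the parameter vector $\theta = (\boldsymbol{\lambda}, \mathbf{S}, \mathbf{D}) \in \mathbb{R}^{m+n+n^2}$ and recall that $\Theta$ is obtained by starting from the non-negativity orthant $\{\boldsymbol{\lambda} \ge 0\}$ and then successively intersecting with the feasibility constraints generated while evaluating the bilinear, quadratic, and linear boxes. The central conceptual move is to distinguish constraints that are genuinely \emph{added to} $\Theta$ (the finiteness conditions) from the case splits that merely select \emph{which closed-form expression defines} $L$ (e.g.\ whether the unconstrained quadratic minimizer $-S_i/(2Q_i)$ lands inside the box). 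Only the former concern us here, since the theorem speaks solely about $\Theta$, and I would argue each such condition is affine in $\theta$.

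The key structural observation I would emphasize is that in every subproblem the quadratic and bilinear coefficients $Q_i$ and $H_{ik}$ are \emph{fixed data} from the original objective rather than parameters, so their signs are known a priori, as is the finiteness pattern of the box bounds $l_i, u_i$. Concretely, for the linear coordinate subproblems the coefficient vector $\mathbf{c} + A^{\top}\boldsymbol{\lambda} - \mathbf{S} - \boldsymbol{\Delta}$ is affine in $\theta$ (since $\boldsymbol{\Delta}$ depends linearly on $\mathbf{D}$), so the three nontrivial cases contribute the affine conditions $c'_i \le 0$, $c'_i \ge 0$, or $c'_i = 0$, with the branch selected purely by whether $l_i$ or $u_i$ is infinite. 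For the quadratic subproblems the convex case $Q_i > 0$ adds no feasibility constraint at all, while the concave case $Q_i < 0$ over an unbounded box appends the trivially infeasible $\{0 \le -1\}$; as the sign of $Q_i$ and the finiteness of $l_i, u_i$ are fixed, the branch is predetermined and the appended constraint is (vacuously) linear.

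I expect the bilinear terms to be the main obstacle, since the quantities $l_1, u_1 = \min/\max(A l_i + C,\, A u_i + C)$ and $l_2, u_2 = l_k + B/A,\, u_k + B/A$ involve a $\min/\max$ together with a divergence analysis that could, at first glance, inject disjunctions and destroy convexity. I would resolve this by noting that $A = H_{ik}$ is a fixed nonzero constant, so its sign fixes which of $A l_i + C$, $A u_i + C$ is $l_1$ and which is $u_1$; hence each of $l_1, u_1, l_2, u_2$ is a single \emph{affine} function of $\theta$ (through $B, C$, which are entries of $\mathbf{D}$), and its infiniteness is dictated entirely by the fixed box data. Consequently each of the four divergence configurations is data-selected, and within it the appended condition is either the linear infeasibility $\{0 \le -1\}$ (when the paired bound is also infinite) or exactly one affine sign constraint such as $u_2 \le 0$ or $l_1 \ge 0$ on a finite affine quantity. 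Collecting all contributions, there are finitely many subproblems ($\binom{n}{2}$ bilinear, $n$ quadratic, $n$ linear) each yielding $O(1)$ affine (in)equalities, together with the $m$ constraints $\boldsymbol{\lambda} \ge 0$; intersecting these half-spaces and hyperplanes and rewriting each equality as two inequalities produces a system $M\theta \le \mathbf{h}$, which is precisely a (possibly empty) polyhedron, completing the argument.
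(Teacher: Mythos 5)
Your proof is correct and follows essentially the same route as the paper's: enumerate the constraints contributed by dual non-negativity and by the bilinear, quadratic, and linear subproblems, observe that each is affine in $\theta$ (with the contradictory $0 \le -1$ covering the unbounded cases), and conclude via closure of polyhedra under finite intersection. If anything, your argument is more careful than the paper's at the bilinear step: you explicitly justify that the $\min/\max$ defining $l_1, u_1$ collapses to a single affine expression because the sign of the fixed coefficient $A = H_{ik}$ is known a priori, and that all case selection (which divergence configuration applies, which branch of the quadratic analysis fires) depends only on fixed problem data rather than on $\theta$ --- points the paper's proof asserts without argument when it simply calls $l_1, u_1, l_2, u_2$ affine expressions in the splitting parameters.
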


\begin{proof}
We enumerate the constraints added during the construction and show that each
is linear in~\(\theta\).

\smallskip
\noindent\textbf{(i)~Dual non-negativity.}  
Step 2 fixes \(\boldsymbol{\lambda}\ge\mathbf 0\), that is,
\(\lambda_i\ge 0\) for \(i=1,\dots,m\); these are linear half-spaces.

\smallskip
\noindent\textbf{(ii)~Bilinear sub-problems.}  
For every pair \((i,k)\) the rule in
Section~\ref{subsec:inner-parts-eval} may add
\[
u_2\le 0,\;u_1\le 0,\;l_2\ge 0,\text{ or }l_1\ge 0
\quad\text{or the contradictory }0\le -1.
\]
Here \(l_1,u_1,l_2,u_2\) are affine expressions in
the splitting parameters \(D_{ik}^{i},D_{ik}^{k}\); hence each inequality is
linear in~\(\theta\).

\smallskip
\noindent\textbf{(iii)~Quadratic sub-problems.}  
When \(A<0\) and a bound is missing, the procedure inserts the contradictory
half–space \(0\le -1\); otherwise no new restriction is
added.  Again, \(0\le -1\) is linear.

\smallskip
\noindent\textbf{(iv)~Linear sub-problem.}  
For every coordinate \(i\) the guard may require
\(c_i\le 0,\;c_i\ge 0,\) or \(c_i=0\).
Because each coefficient takes the affine form
\(c_i=\hat c_i(\boldsymbol{\lambda},\mathbf S,\mathbf D)\),
these relations are linear in~\(\theta\).

\smallskip
\noindent\textbf{(v)~Closure under intersection.}  
The parameter set \(\Theta\) is the finite intersection of the half-spaces
listed in (i)–(iv).  A finite intersection of linear half-spaces is a
polyhedron by definition.

Hence \(\Theta\) is polyhedral.
\end{proof}


\begin{theorem}[Differentiability of L function in PSM]
\label{thm:L-differentiable}
Let 
\(
\theta = (\boldsymbol{\lambda}, \mathbf{S}, \mathbf{D}) \in \mathbb{R}^{m+n+n^2}
\)
be the parameter vector constructed by the PSM computing procedure in Appendix~\ref{app:derivation}. This procedure outputs a parametric sound map (PSM) 
\(
\mathcal{M} = (\Theta, L)
\),
where \( \Theta \subseteq \mathbb{R}^{m+n+n^2} \) is the polyhedral set of admissible parameters (as established in Theorem~\ref{thm:theta-polyhedral}) and 
\(
L : \Theta \to \mathbb{R}
\)
is the function that maps each admissible \( \theta \in \Theta \) to the minimum value of the relaxed dual objective \( \hat g(\theta) \). Then the function \( L(\theta) \) is piecewise-defined and differentiable on the interior of \( \Theta \).

More precisely:
\begin{itemize}
    \item On any region of \( \Theta \) where the minimizing terms and endpoints in all subproblems (bilinear, quadratic, and linear) remain fixed, the expression for \( L(\theta) \) is a differentiable function of~\( \theta \).
    \item The function \( L(\theta) \) is continuous on all of \( \Theta \), and differentiable almost everywhere in \( \Theta \), except possibly at boundaries where the active minimizer (e.g., which endpoint of an interval achieves the minimum) changes.
\end{itemize}
\end{theorem}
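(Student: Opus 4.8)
The plan is to exploit the explicit closed-form structure of $L(\theta) = \hat g(\boldsymbol{\lambda}, \mathbf{S}, \mathbf{D})$ derived in Appendix~\ref{app:derivation}. Recall that $\hat g$ is a \emph{finite} sum: one bilinear subproblem value per pair $(i,k)$, one quadratic subproblem value per coordinate, one linear subproblem value, and an affine residual $d - \boldsymbol{\lambda}^{\top}\mathbf{b}$. Since continuity, local Lipschitzness, and differentiability are all preserved under finite sums, it suffices to analyze each summand separately and then recombine. The affine residual is trivially smooth in $\theta$, so the real work concerns the three families of box-minimization terms.

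The key structural observation I would establish first is that each subproblem minimum, restricted to the feasible set $\Theta$, equals a pointwise minimum of finitely many functions that are each \emph{polynomial} in $\theta$. This holds because the interval bounds $l_i, u_i$ are fixed constants of the input abstract element (independent of $\theta$), whereas the coefficients feeding each subproblem---namely $S_i$, the $D_{ik}^{i}, D_{ik}^{k}$, and the effective linear coefficients $c_i + (A^{\top}\boldsymbol{\lambda})_i$---are all affine in $\theta$. Concretely: the linear term is $\min\{c_i l_i,\, c_i u_i\}$, a min of two affine functions of $\theta$ (and exactly the closed form with $|c_i|$ given in Appendix~\ref{subsec:inner-parts-eval}); the quadratic term is the min over the two endpoint values and, when $Q_i>0$ with the vertex inside the interval, the interior value $-S_i^2/(4Q_i)$, each polynomial in $\theta$; and the bilinear term is the min of the four corner products minus $BC/A$, each a product of affine functions of $\theta$ and hence quadratic. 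On $\Theta$, the feasibility constraints collected in Appendix~\ref{subsec:inner-parts-eval} are precisely those that exclude the unbounded configurations, so the infimum over each (possibly unbounded) box is attained and equals this finite min-of-candidates.

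With this representation in hand, I would invoke the standard fact that a pointwise minimum of finitely many $C^{\infty}$ functions is continuous everywhere, locally Lipschitz, and differentiable at every point where the minimizing candidate is unique, with gradient equal to that of the active candidate. This immediately yields the first bullet: on any region where, for every subproblem, the active candidate (endpoint, interior critical point, or selected corner) is held fixed, $L$ coincides with a single smooth expression and is therefore differentiable. For the second bullet, the non-differentiability set is contained in the finite union---over all subproblems and all pairs of candidates---of the tie sets $\{f_a(\theta) = f_b(\theta)\}$ where two candidate expressions are equal; each is a proper algebraic subvariety of $\mathbb{R}^{m+n+n^2}$ and hence has Lebesgue measure zero, and a finite union of measure-zero sets is measure zero. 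Continuity on all of $\Theta$ follows because a finite sum of continuous min-expressions is continuous.

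The main obstacle I anticipate is careful bookkeeping rather than deep mathematics. I must verify that the feasibility constraints defining $\Theta$ exactly match the boundedness cases in the subproblem analysis, so that the min-of-candidates formula is valid throughout $\Theta$ and never silently collapses to $-\infty$. The delicate points are the kinks introduced by the absolute value $|c_i|$ in the linear closed form and by the sign-dependent selection of the active corner in the bilinear case; here I would confirm that the min-of-candidates representation makes continuity automatic across these switching surfaces (the two candidates agree exactly where they tie), so differentiability can fail only on those lower-dimensional surfaces, consistent with the almost-everywhere claim. Finally, since $\Theta$ is polyhedral (Theorem~\ref{thm:theta-polyhedral}) and the candidate-switching surfaces are algebraic, these surfaces partition $\Theta$ into finitely many semi-algebraic cells on whose interiors $L$ is smooth, which is the precise sense of ``piecewise-defined and differentiable'' asserted in the statement.
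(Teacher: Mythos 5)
Your overall architecture---decompose $L$ into the affine residual $d - \boldsymbol{\lambda}^{\top}\mathbf{b}$ plus the bilinear, quadratic, and linear subproblem values, analyze each summand, and recombine---is the same as the paper's, and your measure-theoretic refinement (tie sets as zero sets of polynomials, hence Lebesgue-null) is a sharpening of the paper's informal ``differentiable except where the active minimizer changes.'' However, your central structural lemma is false in one case: the claim that \emph{every} subproblem value equals a pointwise minimum of finitely many polynomial candidates in $\theta$ fails for the convex quadratic subproblem. For $Q_i > 0$ your candidate list is $\{\,Q_i l_i^2 + S_i l_i,\; Q_i u_i^2 + S_i u_i,\; -S_i^2/(4Q_i)\,\}$, but $-S_i^2/(4Q_i)$ is the \emph{global unconstrained} minimum of $Q_i v^2 + S_i v$ and hence lies below both endpoint values for all $S_i$; the pointwise min of these three candidates is therefore identically $-S_i^2/(4Q_i)$, which is strictly smaller than the true box-constrained minimum whenever the vertex $-S_i/(2Q_i)$ falls outside $[l_i,u_i]$. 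The correct value function is the clipped piecewise formula (vertex value in the middle regime, endpoint value when the vertex is projected onto the box), which is concave in $S_i$---an infimum of the \emph{infinitely many} affine functions $S_i \mapsto Q_i v^2 + S_i v$ indexed by $v \in [l_i,u_i]$---and admits no finite min-of-candidates representation of the kind your tie-set argument requires. As written, your almost-everywhere differentiability argument therefore does not cover the quadratic terms.

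The gap is local and easily repaired, essentially by doing what the paper's proof does for case (ii): treat the convex quadratic term by direct case analysis rather than a min of candidates. The regime boundaries are the affine hyperplanes $S_i = -2Q_i l_i$ and $S_i = -2Q_i u_i$ (affine in $\theta$ since $S_i$ is a coordinate of $\theta$), which are null sets, so the almost-everywhere conclusion survives; in fact the quadratic value function is $C^1$ across these boundaries---when the vertex hits an endpoint, both the values and the $S_i$-derivatives of the vertex expression $-S_i/(2Q_i)$ and the endpoint expression agree---so the quadratic terms contribute no kinks at all, only failures of second-order smoothness. The rest of your argument is sound: the linear terms are a min of two affine candidates (or a single affine expression on $\Theta$ when a bound is infinite, exactly because the feasibility constraints you flag exclude the unbounded configurations), and for the bilinear terms the minimum of a bilinear function over a rectangle is genuinely attained at a corner, so the four-corner representation is a legitimate finite min of quadratics in $\theta$. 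With the quadratic case patched, your semi-algebraic-cell picture of ``piecewise-defined and differentiable'' matches the theorem as stated.
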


\begin{proof}
The function $L(\theta)$ is computed as described in the derivation procedure in Appendix~\ref{app:derivation}, by summing the minimum values of several subproblems, one for each bilinear, quadratic, and linear term in the relaxed objective. We analyze the differentiability of each class of subproblem:

\smallskip
\noindent\textbf{(i)~Linear terms.} The minimized value of each $c_i v_i$ term is computed based on whether $c_i$ is positive, negative, or zero. The result is a piecewise-linear function of $c_i$, and since $c_i = \hat c_i(\boldsymbol{\lambda}, \mathbf{S}, \mathbf{D})$ is affine in $\theta$, the contribution $L_i$ is piecewise-affine in $\theta$, hence differentiable except at switching points (where the sign of $c_i$ changes).

\smallskip
\smallskip
\noindent\textbf{(ii)~Quadratic terms.} Consider subproblems of the form $\min_{v_i \in [l_i, u_i]} A v_i^2 + B v_i$ with fixed $A \ne 0$ and parameter-dependent $B = \hat{B}(\theta)$. The analysis splits into two cases:

\begin{itemize}
\item If $A > 0$ (convex), the objective is minimized at the unconstrained critical point $v_i^* = -B/(2A)$. If $v_i^* \in [l_i, u_i]$, the minimum is attained there and equals $A (v_i^*)^2 + B v_i^*$, which is a smooth function of $B$, hence of $\theta$. If $v_i^* \notin [l_i, u_i]$, the minimum occurs at one of the endpoints. Since $B$ is affine in $\theta$, both $A l_i^2 + B l_i$ and $A u_i^2 + B u_i$ are smooth in $\theta$, and their minimum is piecewise-smooth depending on which endpoint is active. Thus, in both cases, the quadratic contribution is differentiable on regions where the active minimizer remains fixed.

\item If $A < 0$ (concave), the objective can decrease without bound unless both $l_i$ and $u_i$ are finite. In particular, the term $A v_i^2$ dominates and diverges to $-\infty$ if $v_i \to \pm \infty$. To ensure boundedness, the feasibility check adds the constraint that both bounds must be finite. If so, the minimum is computed as $\min\{A l_i^2 + B l_i,\; A u_i^2 + B u_i\}$, which is continuous and piecewise-differentiable in $\theta$.
\end{itemize}

\smallskip
\noindent\textbf{(iii)~Bilinear terms.} Consider the subproblem
\[
\min_{(v_i, v_k) \in [l_i, u_i] \times [l_k, u_k]} \Big(A\, v_i v_k + B\, v_i + C\, v_k\Big)\,,
\]
with $A \ne 0$ fixed and $B$, $C$ affine in $\theta$. Rewriting:
\[
A v_i v_k + B v_i + C v_k = (A v_i + C)(v_k + B/A) - \frac{B C}{A}\,,
\]
we reduce to minimizing a product of two affine expressions over intervals $[l_1, u_1]$ and $[l_2, u_2]$, where:
\[
l_1 := \min(A l_i + C,\; A u_i + C), \quad l_2 := l_k + B/A\,, \quad \text{etc.}
\]

The minimal value is:
\[
l_{\text{bilin}} := \min\{l_1 l_2,\; l_1 u_2,\; u_1 l_2,\; u_1 u_2\} - \frac{B C}{A}\,.
\]

Unboundedness arises in four cases (e.g., $l_1 = -\infty$ and $u_2 > 0$). If the paired bound is also infinite, we add the contradictory constraint $0 \le -1$ to $\Theta$; otherwise, we insert the appropriate bound constraint (e.g., $u_2 \le 0$). This ensures the subproblem remains well-posed and the resulting $l_{\text{bilin}}$ is finite and piecewise-smooth in $\theta$.

\smallskip
\noindent\textbf{(iv)~Summation.} The total value $L(\theta)$ is the sum of these subproblem contributions. Since each term is continuous and piecewise-differentiable, their sum is also continuous and piecewise-differentiable. Differentiability holds on any region where the active minimizers in all subproblems are fixed.
\end{proof}

\begin{theorem}[EUM over-approximates a sequence]\label{thm:eum-over}
Let $\tilde{S}$ be a sequence of assignments, and let $\block(\tilde{S})$ have concrete semantics given by its EUM. Then
\[
\llbracket \tilde{S} \rrbracket \;\subseteq\; \llbracket \block(\tilde{S}) \rrbracket .
\]
\end{theorem}
\begin{proof}
Let $(S,T)$ be program states (valuations) with $(S,T)\in \llbracket \tilde{S} \rrbracket$, and write
$\tilde{S}$ as $x_1:=e_1;\,x_2:=e_2;\,\dots;\,x_k:=e_k$.
Unfolding the run gives intermediate states $U_0=S,U_1,\dots,U_k=T$, where step $i$ sets
$x_i$ to the value of $e_i$ in $U_{i-1}$ and leaves other variables unchanged.
Thus $T(x_1)$ depends on $S$, $T(x_2)$ depends on $S$ and the new value of $x_1$, and in general
$T(x_i)$ depends on $S$ and the updated values $x_1,\dots,x_{i-1}$.
By a straightforward induction on $i$, there exist flattened expressions $F_v$ such that
\[
T(v)=\text{value of }F_v\text{ in }S\quad\text{for all }v,
\]
i.e., all intermediate states $U_1,\dots,U_{k-1}$ can be eliminated by substitution so that each
final variable is expressed purely in terms of the initial state.
The EUM for the block captures exactly this flattened relation: it is obtained from these equations
by symbolic rewriting and algebraic simplification, in a way that it exactly computes the function $F$, and so $(S,T)\in \llbracket \block(\tilde{S}) \rrbracket$. Since $(S,T)$ was arbitrary, we conclude
$\llbracket \tilde{S} \rrbracket \subseteq \llbracket \block(\tilde{S}) \rrbracket$.
\paragraph{Why inclusion can be strict.}
Symbolic simplification in the EUM can hide intermediate errors and thereby admit extra states.
\emph{Example.} Consider
\[
a := a_1;\quad b := a_1;\quad c := a/d - b/d .
\]
Flattening gives \(a'=a_1,\ b'=a_1,\ c'=(a_1/d)-(a_1/d)=0\).  
If the EUM keeps only these simplified postconditions and omits the condition on the intermediate divisions, then states with \(d=0\) and \(c'=0\) satisfy the EUM even though the stepwise execution would fail on the divisions. Thus new states are added and \(\llbracket \tilde{S} \rrbracket \subsetneq \llbracket \block(\tilde{S}) \rrbracket\).
\end{proof}

\noindent \textbf{Remark.} The sequences $\seq$ considered by \toolname are those whose combined effect lies within the QGO class. Such sequences consist solely of affine and quadratic assignments, which are total operations defined for all input states. For such sequences, no intermediate errors can arise during symbolic flattening, and the EUM captures the exact semantics without introducing additional states, i.e., $\llbracket \seq \rrbracket = \llbracket \block(\seq) \rrbracket$.

\clearpage

\section{Evaluation Data}

\subsection{Linear Case}
\label{sec:appeval1}

\begin{table}[H]
\caption{Full stats for Zones Experiments (Affine)}
\centering
\setlength{\tabcolsep}{5pt}
\small
\begin{tabular}{@{}ccccc@{}}
\toprule
\makecell{\textbf{Analysis}\\ \textbf{Method}} & \makecell{\textbf{Runtime}\\ \textbf{(s)}} & \makecell{\textbf{Invariants Strengthened}\\ \textbf{over ELINA (OG)}} & \makecell{\textbf{Invariants Strengthened}\\ \textbf{from Prev. Step (MG)}} & \makecell{\textbf{Remaining Gap (RG)}\\ \textbf{from most-precise}} \\
\midrule
ELINA & 18.53 & 0 & - & 121 \\
AbsEvolve-R-0 & 60.47 & 91 & 91 & 48 \\
AbsEvolve-R-1 & 61.82 & 97 & 10 & 42 \\
AbsEvolve-R-2 & 65.2 & 116 & 21 & 30 \\
AbsEvolve-R-3 & 68.8 & 121 & 30 & 4 \\
AbsEvolve-R-4 & 69.59 & 121 & 0 & 4 \\
AbsEvolve-R-5 & 71.9 & 121 & 4 & 0 \\
LP Solver Based & 230.74 & 121 & - & 0 \\
\bottomrule
\end{tabular}
\label{table:zones-lin-full}
\end{table}

\begin{table}[H]
\caption{Full stats for Octagon Experiments (Affine)}
\centering
\setlength{\tabcolsep}{5pt}
\small
\begin{tabular}{@{}ccccc@{}}
\toprule
\makecell{\textbf{Analysis}\\ \textbf{Method}} & \makecell{\textbf{Runtime}\\ \textbf{(s)}} & \makecell{\textbf{Invariants Strengthened}\\ \textbf{over ELINA (OG)}} & \makecell{\textbf{Invariants Strengthened}\\ \textbf{from Prev. Step (MG)}} & \makecell{\textbf{Remaining Gap (RG)}\\ \textbf{from most-precise}} \\
\midrule
ELINA & 20.11 & 0 & - & 59 \\
AbsEvolve-R-0 & 82.07 & 24 & 24 & 55 \\
AbsEvolve-R-1 & 83.9 & 42 & 18 & 41 \\
AbsEvolve-R-2 & 89.07 & 57 & 31 & 29 \\
AbsEvolve-R-3 & 102.66 & 59 & 29 & 4 \\
AbsEvolve-R-4 & 108.68 & 59 & 0 & 4 \\
AbsEvolve-R-5 & 111.06 & 59 & 4 & 0 \\
LP Solver Based & 344.01 & 59 & - & 0 \\
\bottomrule
\end{tabular}
\label{table:oct-lin-full}
\end{table}

\subsection{Nonlinear Operators and Expressive Domains}
\label{sec:appeval2}

The following programs give numerical overflow errors while running polyhedra analysis with ELINA: geo1-u.c, geo1-u2.c, fermat1.c, fermat2.c, fermat1-ll.c, fermat2-ll.c, dijkstra.c and hard.c

\begin{table}[H]
\caption{Full stats for Zones Experiments (Affine + Quadratic)}
\centering
\setlength{\tabcolsep}{5pt}
\small
\begin{tabular}{@{}cccc@{}}
\toprule
\makecell{\textbf{Analysis}\\ \textbf{Method}} & \makecell{\textbf{Runtime}\\ \textbf{(s)}} & \makecell{\textbf{Invariants Strengthened}\\ \textbf{over ELINA (OG)}} & \makecell{\textbf{Invariants Strengthened}\\ \textbf{from Prev. Step (MG)}} \\
\midrule
ELINA & 15.48 & 0 & - \\
AbsEvolve-R-0 & 68.09 & 110 & 110 \\
AbsEvolve-R-1 & 75.34 & 112 & 6 \\
AbsEvolve-R-2 & 79.77 & 131 & 19 \\
AbsEvolve-R-3 & 85.86 & 135 & 25 \\
AbsEvolve-R-4 & 95.78 & 135 & 0 \\
AbsEvolve-R-5 & 97.94 & 135 & 4 \\
\bottomrule
\end{tabular}
\label{table:zones-quad-full}
\end{table}

\begin{table}[H]
\caption{Full stats for Octagon Experiments (Affine + Quadratic)}
\centering
\setlength{\tabcolsep}{5pt}
\small
\begin{tabular}{@{}cccc@{}}
\toprule
\makecell{\textbf{Analysis}\\ \textbf{Method}} & \makecell{\textbf{Runtime}\\ \textbf{(s)}} & \makecell{\textbf{Invariants Strengthened}\\ \textbf{over ELINA (OG)}} & \makecell{\textbf{Invariants Strengthened}\\ \textbf{from Prev. Step (MG)}} \\
\midrule
ELINA & 16.44 & 0 & - \\
AbsEvolve-R-0 & 118.39 & 95 & 95 \\
AbsEvolve-R-1 & 125.75 & 97 & 14 \\
AbsEvolve-R-2 & 137.52 & 111 & 40 \\
AbsEvolve-R-3 & 154.76 & 113 & 25 \\
AbsEvolve-R-4 & 158.3 & 113 & 0 \\
AbsEvolve-R-5 & 168.08 & 113 & 14 \\
\bottomrule
\end{tabular}
\label{table:oct-quad-full}
\end{table}

\begin{table}[H]
\caption{Full stats for Polyhedra Experiments (Quadratic)}
\centering
\setlength{\tabcolsep}{5pt}
\small
\begin{tabular}{@{}cccc@{}}
\toprule
\makecell{\textbf{Analysis}\\ \textbf{Method}} & \makecell{\textbf{Runtime}\\ \textbf{(s)}} & \makecell{\textbf{Invariants Strengthened}\\ \textbf{over ELINA (OG)}} & \makecell{\textbf{Invariants Strengthened}\\ \textbf{from Prev. Step (MG)}} \\
\midrule
ELINA & 23.2 & 0 & - \\
AbsEvolve-R-0 & 99.95 & 38 & 38 \\
AbsEvolve-R-1 & 107.1 & 38 & 0 \\
AbsEvolve-R-2 & 112.88 & 40 & 40 \\
AbsEvolve-R-3 & 126.95 & 43 & 9 \\
AbsEvolve-R-4 & 130.42 & 45 & 3 \\
AbsEvolve-R-5 & 134.16 & 45 & 0 \\
\bottomrule
\end{tabular}
\label{table:pk-quad-full}
\end{table}

\subsection{Results without analyzing sequences together}
\label{sec:appeval3}

\begin{figure}[H]
    \centering
    \begin{subfigure}[b]{0.32\linewidth}
        \centering
        \includegraphics[width=\linewidth]{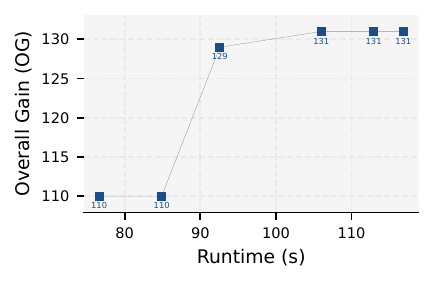}
        \vspace{0.25em}
        \includegraphics[width=\linewidth]{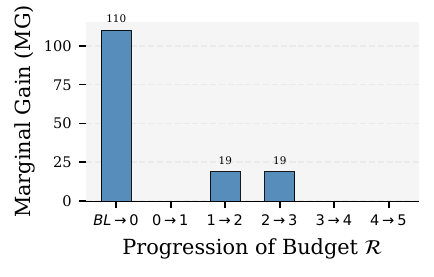}
        \caption{Zones}
    \end{subfigure}
    \hfill
    \begin{subfigure}[b]{0.32\linewidth}
        \centering
        \includegraphics[width=\linewidth]{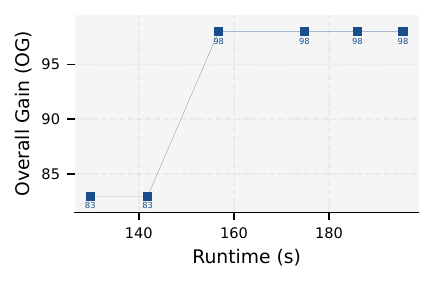}
        \vspace{0.25em}
        \includegraphics[width=\linewidth]{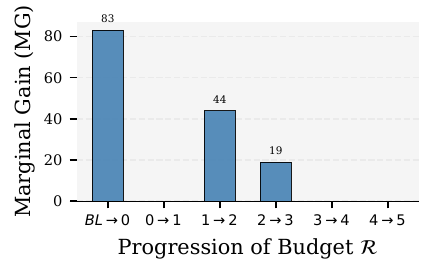}
        \caption{Octagons}
    \end{subfigure}
    \hfill
    \begin{subfigure}[b]{0.32\linewidth}
        \centering
        \includegraphics[width=\linewidth]{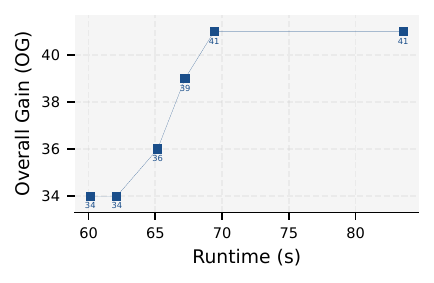}
        \vspace{0.25em}
        \includegraphics[width=\linewidth]{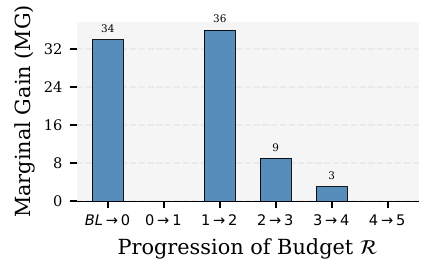}
        \caption{Polyhedra}
    \end{subfigure}

    \vspace{0.4em}
    \caption{Evolution of invariant strengthening while handling quadratic assignments across domains.}
    \vspace{-1em}
\end{figure}

\begin{table}[H]
\caption{Stats for Zones Experiments without sequence merging (Affine + Quadratic)}
\centering
\setlength{\tabcolsep}{5pt}
\small
\begin{tabular}{@{}cccc@{}}
\toprule
\makecell{\textbf{Analysis}\\ \textbf{Method}} & \makecell{\textbf{Runtime}\\ \textbf{(s)}} & \makecell{\textbf{Invariants Strengthened}\\ \textbf{over ELINA (OG)}} & \makecell{\textbf{Invariants Strengthened}\\ \textbf{from Prev. Step (MG)}} \\
\midrule
ELINA & 13.53 & 0 & - \\
AbsEvolve-R-0 & 76.62 & 110 & 110 \\
AbsEvolve-R-1 & 84.89 & 110 & 0 \\
AbsEvolve-R-2 & 92.58 & 129 & 19 \\
AbsEvolve-R-3 & 105.99 & 131 & 19 \\
AbsEvolve-R-4 & 112.91 & 131 & 0 \\
AbsEvolve-R-5 & 116.84 & 131 & 0 \\
\bottomrule
\end{tabular}
\label{table:zones-quad-nc}
\end{table}

\begin{table}[H]
\caption{Stats for Octagon Experiments without sequence merging (Affine + Quadratic)}
\centering
\setlength{\tabcolsep}{5pt}
\small
\begin{tabular}{@{}cccc@{}}
\toprule
\makecell{\textbf{Analysis}\\ \textbf{Method}} & \makecell{\textbf{Runtime}\\ \textbf{(s)}} & \makecell{\textbf{Invariants Strengthened}\\ \textbf{over ELINA (OG)}} & \makecell{\textbf{Invariants Strengthened}\\ \textbf{from Prev. Step (MG)}} \\
\midrule
ELINA & 15.08 & 0 & - \\
AbsEvolve-R-0 & 129.81 & 83 & 83 \\
AbsEvolve-R-1 & 141.89 & 83 & 0 \\
AbsEvolve-R-2 & 156.8 & 98 & 44 \\
AbsEvolve-R-3 & 174.81 & 98 & 19 \\
AbsEvolve-R-4 & 185.94 & 98 & 0 \\
AbsEvolve-R-5 & 195.54 & 98 & 0 \\
\bottomrule
\end{tabular}
\label{table:oct-quad-nc}
\end{table}

\begin{table}[H]
\caption{Stats for Polyhedra Experiments without sequence merging (Quadratic)}
\centering
\setlength{\tabcolsep}{5pt}
\small
\begin{tabular}{@{}cccc@{}}
\toprule
\makecell{\textbf{Analysis}\\ \textbf{Method}} & \makecell{\textbf{Runtime}\\ \textbf{(s)}} & \makecell{\textbf{Invariants Strengthened}\\ \textbf{over ELINA (OG)}} & \makecell{\textbf{Invariants Strengthened}\\ \textbf{from Prev. Step (MG)}} \\
\midrule
ELINA & 26.93 & 0 & - \\
AbsEvolve-R-0 & 60.13 & 34 & 34 \\
AbsEvolve-R-1 & 62.12 & 34 & 0 \\
AbsEvolve-R-2 & 65.17 & 36 & 36 \\
AbsEvolve-R-3 & 67.23 & 39 & 9 \\
AbsEvolve-R-4 & 69.41 & 41 & 3 \\
AbsEvolve-R-5 & 83.57 & 41 & 0 \\
\bottomrule
\end{tabular}
\label{table:pk-quad-nc}
\end{table}
\section{Extension to Joins and Disjunctions}
\label{app:joins}

The UPOSE algorithm extends naturally to joins and disjunctions. Consider 
a disjunctive feasible set $S_1 \cup S_2$, arising for instance from a 
disjunctive guard such as $x \leq y \lor z \leq a$. For template direction 
$i$, the bound to compute is:
$$
m_i := \min_{x \in S_1 \cup S_2} f_i(x) = \min(m_{i,1}, m_{i,2}), 
\quad \text{where} \quad m_{i,j} := \min_{x \in S_j} f_i(x).
$$
For each branch $j \in \{1, 2\}$, UPOSE constructs a parametric lower bound 
$L_{i,j}(\lambda_j)$ over a polyhedral parameter space $P_{i,j}$, such that 
$\forall \lambda_j \in P_{i,j}, \; L_{i,j}(\lambda_j) \leq m_{i,j}$. The 
two branches are then combined as:
$$
L_i(\lambda_1, \lambda_2) := \min(L_{i,1}(\lambda_1), L_{i,2}(\lambda_2)),
$$
over the joint polyhedral parameter space $P_i = \{(\lambda_1, \lambda_2) \mid 
\lambda_1 \in P_{i,1}, \lambda_2 \in P_{i,2}\}$. This yields a polyhedral 
parameter space with a piecewise differentiable objective, on which AGG can 
be applied directly. More broadly, this demonstrates that our approach is 
general and can naturally accommodate joins and disjunctions, and once 
supported, they immediately work across all domains handled by the framework, 
including new user-defined TCM domains.

\end{document}